  \providecommand\BibTeX{{%
    \normalfont B\kern-0.5em{\scshape i\kern-0.25em b}\kern-0.8em\TeX}}}
\newtheorem{definition}{Definition}
\newtheorem{proposition}{Proposition}
\newtheorem{corollary}{Corollary}
\newtheorem{example}{Example}
\newtheorem{theorem}{Theorem}
\title{On the Relationship Between Relevance and Conflict\\ in Online Social Link Recommendations}
\author{%
  Yanbang Wang\\
  Department of Computer Science\\
  Cornell University\\
  ywangdr@cs.cornell.edu
  \And
  Jon Kleinberg\\
  Department of Computer Science\\
  Cornell University\\
  kleinberg@cornell.edu
}
\begin{document}

\maketitle

\begin{abstract}
  In an online social network, link recommendations are a way for users to discover relevant links to people they may know, thereby potentially increasing their engagement on the platform. However, the addition of links to a social network can also have an effect on the level of conflict in the network --- expressed in terms of polarization and disagreement. To this date, however, we have very little understanding of how these two implications of link formation relate to each other: are the goals of high relevance and conflict reduction aligned, or are the links that users are most likely to accept fundamentally different from the ones with the greatest potential for reducing conflict? Here we provide the first analysis of this question, using the recently popular Friedkin-Johnsen model of opinion dynamics. We first present a surprising result on how link additions shift the level of opinion conflict, followed by explanation work that relates the amount of shift to structural features of the added links. We then characterize the gap in conflict reduction between the set of links achieving the largest reduction and the set of links achieving the highest relevance. The gap is measured on real-world data, based on instantiations of relevance defined by 13 link recommendation algorithms. We find that some, but not all, of the more accurate algorithms actually lead to better reduction of conflict. Our work suggests that social links recommended for increasing user engagement may not be as conflict-provoking as people might have thought.
\end{abstract}

\section{Introduction}
Recent years have seen an explosion in the usage of social media and online social networks. In 2022, an estimated 4.6 billion people worldwide used online social networks regularly \cite{kemp_2023}. Online social networks such as Facebook, LinkedIn, and Twitter are transforming the society by enabling people to exchange opinions and knowledge at an unprecedented rate and scale. 

However, there are also significant concerns about the disruptive effects of social media, as people observe the growing level of polarization and disagreement in online communities.
Conflicts arise over topics ranging from politics \cite{barbera2020social,liu2021interaction}, to entertainment \cite{lawrence2018digital,amendola2015evolving} and  healthcare \cite{holone2016filter}. 
For our purposes in this paper, we will adopt terminology from the social sciences in distinguishing between polarization and disagreement as follows:  {\em polarization} will refer to how much people's opinions deviate from the average, and {\em disagreement} will refer to how much people directly connected to each other in the social network differ in their opinions. 

Reducing polarization and disagreement can be a benefit both to individual users of these platforms, who would experience less conflict and stress, and potentially also to the platforms themselves, to the extent that too much contentiousness can make the platform unattractive to users.

The root causes of increased polarization and disagreement have been the subject of much concern and debate. One popular idea is Eli Pariser's ``filter bubble'' theory \cite{pariser2011filter}. The theory attributes the intensification of online conflict to the recommendation algorithms deployed by providers for maximizing user engagement. It conjectures that a person becomes more biased and cognitively blinded because recommendation algorithms keep suggesting new connections and new pieces of content associated with similar-minded people, thus solidifying the person's pre-existing bias. 

The ``filter bubble'' theory is influential, but the empirical evidence supporting it has been limited. For example, \cite{hosanagar2014will} conducted experiments of hypothesis testing on the effect of personalization on consumer fragmentation, concluding that there was not enough evidence to support the ``filter bubble'' theory; \cite{nguyen2014exploring} also employed a data-driven method by examining the data from a movie recommender system and checked if users may gradually get exposed to more diverse content; \cite{vaccari2016echo} surveyed a number of active twitter users and suggested that the offline habits of online users may actually play a more important role in creating ``filter bubble''. These empirical studies indicate that the magnitude of the ``filter bubble'' effect in social recommendation system may not be as significant as the theory's popularity implies. However, more principled understandings of the strength of ``filter bubble'' effect caused by social recommendations are missing from the picture.\\

\vspace{-3mm}
This paper aims to explore theoretical evidence and principled characterizations of the relationship between relevance and conflict in online social link recommendations. We seek to understand \textbf{how social links recommended for maximizing user engagement (\textit{i.e.} relevance) may shape the polarization and disagreement (\textit{i.e.} conflict) landscape of a social network.} 

We note that the ``filter bubble'' theory essentially counter-poses two important aspects of social link recommendations: ``relevance'' and ``reduction of conflict''. The former has been well-studied as the classic problem of link prediction \cite{hasan2011survey,liben2003link,daud2020applications}; the latter involves social accountability of link recommendations, and has received rapidly increasing attentions in recent years \cite{musco2018minimizing,zhu2021minimizing}. To date, however, there has been very little attempt to study the relationship between these two aspects, in contrast to the well-established paradigms to research the relationship between relevance and novelty \cite{vargas2011rank,zhao2016much}, relevance and diversity \cite{kunaver2017diversity,sa2022diversity}, relevance and serendipity \cite{ge2010beyond,chen2019serendipity}  in recommendations.

To theoretically analyze how opinions change in response to the addition of new links, it is necessary for us to choose a base model that specifies the basic rules of (at least approximately) how opinions propagate in social networks.  There are several options for this purpose, including the Friedkin-Johnsen (FJ)  model  \cite{friedkin1990social}, the Hegselmann-Krause (HK) model \cite{rainer2002opinion}, the voter's model \cite{holley1975ergodic}, etc. In this work, we choose the popular FJ model which will be formally introduced in Sec.\ref{sec:prelim}. 

There are two reasons for the FJ model to be the most suitable base model for our analytical purpose. The first is its outstanding empirical validity and practicability: according to recent surveys \cite{askarisichani2022predictive}, the FJ model is the only opinion dynamics model to date on which a sustained line of human-subject experiments has confirmed the model's predictions of opinion changes \cite{friedkin1990peer,friedkin2017truth,friedkin1990social,friedkin2016theory,friedkin2016network,friedkin2019mathematical,friedkin2021group,askarisichani2020expertise,ye2022modeling,de2014learning,bernardo2021achieving}. The second reason is the availability of necessary notions and definitions associated with the conflict measure: to our best knowledge, the FJ model is also the only opinion dynamics model upon which social tensions like polarization and disagreement have been rigorously defined~\cite{musco2018minimizing}, and widely accepted \cite{chen2018quantifying,zhu2021minimizing,gaitonde2020adversarial,racz2022towards,chitra2019understanding}. Therefore, it is most meaningful to conduct in-depth theoretical analysis based on the FJ model.

\noindent \textbf{Proposed Questions.} Three questions are central to our research of the relationship between relevance and  minimization of opinion conflict (polarization and disagreement) in link recommendations:\vspace{-2.5mm}
\begin{enumerate}[leftmargin=6.5mm]
    \item[\textbf{Q1.}] What are the structural features of the links that can reduce conflict most effectively? Are the features aligned with those of relevant links (\textit{i.e.} links most likely to be accepted by users)?\vspace{-1mm}
    \item[\textbf{Q2.}] What is the empirical degree of alignment between relevance and conflict minimization for various link recommendation algorithms executed on real-world data? \vspace{-1mm}
    \item[\textbf{Q3.}] What are the limitations of our theoretical analysis?
\end{enumerate}

\noindent{\textbf{Main Results.}}\vspace{0.4mm} \\
For \textbf{Q1}, we first study the amount of change in opinion conflict (polarization + disagreement) caused by general link additions. We derive closed-form expressions for this, which reveal a perhaps surprising fact that purely adding social links can't increase opinion conflict. Because link additions essentially improves network connectivity, we further present a theorem that uses connectivity terms to characterize opinion conflict, leading to a conclusion aligned with the surprising fact.

To interpret the structural features of links that can reduce opinion conflict most effectively, we conduct a series of explanation work on the closed-form expressions derived for conflict change. We manage to associate the expressions' components to various types of graph distances, which are then summarized into two criteria for finding conflict-minimizing links in social networks:  (1) for a single controversial topic, conflict-minimizing links should have both end nodes as close as possible in the network and their expressed opinions on that topic as different as possible; (2) for a random distribution of controversial topics, conflict-minimizing links should have both end nodes in different ``clusters'' of the network while still remaining fairly well-connected with each other. 

For \textbf{Q2}, we introduce a model-agnostic measure called conflict awareness to empirically evaluate a recommendation model's ability of reducing conflict. We measure conflict awareness for many link recommendation algorithms on real-world social networks. We find that, some, but not all, of the more accurate recommendation algorithms have better ability to reduce conflict more effectively.

For \textbf{Q3}, we discuss a limitations of analyzing the change of opinion conflict using the FJ model, presented in the form of a paradox. The paradox indicates that reducing conflict on social networks by suggesting friend links could actually make people more stressful and upset about their social engagement. This leaves an interesting topic for future study.

\vspace{-1mm}
\section{Preliminaries}\label{sec:prelim}
\subsection{Social Network Model}\label{subsec:prelim}\vspace{-2mm}
We consider a social network modeled by an undirected graph $G=(V, E)$ where $V$ is the set of nodes and $E$ is the set of links. The adjacency matrix $A=(a_{ij})_{|V|\times|V|}$ is a  symmetric matrix with $a_{ij}=1$ if link $e=(i,j) \in E$, and $a_{ij}=0$ otherwise. In more general case $a_{ij}$ can also be a non-negative scalar representing the strength of social interaction between $i$ and $j$. The Laplacian matrix $L$ is defined as $L=D-A$ where $D=\text{diag}((\sum_{j=1}^{|V|}a_{ij})_{i=1,...,|V|})$ is the degree matrix. The incidence matrix $B$ is a $|V|\times|E|$ matrix whose all elements are zero except that for each link $e=(i,j)\in E$, $B_{ie}=1$, $B_{je}=-1$ ($i$, $j$ interchangeable as $G$ is undirected). Given a link $e=(i,j)$, its edge (indicator) vector $b_e$ is the column vector in $B$ whose column index is $e$, $(b_e)_i=1, (b_e)_j=-1$.

Note that modeling a social network by an undirected graph does \textit{not} restrict the social influence between two connected people to be symmetric in both directions.  In the Friedkin-Johnsen model introduced below, the amount of influence carried by a link is normalized by the social presence (node degree). Symmetry is thus broken because the two destination nodes can have different degrees.

\vspace{-1mm}
\subsection{Friedkin-Johnsen Opinion Model}\vspace{-1mm}
The Friedkin-Johnsen (FJ) model is one of the most popular models for studying opinion dynamics on social networks in recent years. Its basic assumption is that each person $i$ has two opinions: an initial (``innate'' \cite{chitra2019understanding}) opinion $s_i$ that remains fixed, and an expressed opinion $z_i$ that evolves by iteratively averaging $i$'s initial opinion and its neighbors' expressed opinions at each time step:\vspace{-1mm}
\begin{align}\label{eq:fj}
    z^{(0)}_i = s_i, \;\;\;\;\; z^{(t+1)}_i = \frac{s_i+\sum_{j\in N_i}a_{ij}z^{(t)}_i}{1+\sum_{j\in N_i}a_{ij}}
\end{align}\vspace{-0.5mm}
where $N_i$ is the neighbors of node $i$;  $a_{ij}$ is the interpersonal interaction strength: in the simplest form, it takes binary (0/1) values indicating whether $i,j$ are friends with each other; more sophisticated rules for assigning continuous values also exist. 

It can be proved that the expressed opinions will eventually reach equilibrium, expressed in vector form: $ z^{(\infty)} = (I+L)^{-1}s$,
where $z^{(\infty)}, s\in \mathbb{R}^{|V|}$ are the opinion vectors. For simplicity, this paper writes $z$ in place of $z^{(\infty)}$ as the primary focus is the equilibrium state of $z$. While expressed opinions are guaranteed to reach equilibrium, they rarely reach global consensus. Previous studies \cite{musco2018minimizing,chen2018quantifying} have extended $z$, $s$, and their interplay with $G$ into a plethora of measures to reflect various types of tension over the social network, among which the most important ones are: \vspace{-1.5mm}
\begin{itemize}[leftmargin=5mm]
    \item \textbf{Disagreement:} $\mathcal{D}(G, s)=\sum_{(i,j) \in E} a_{ij}(z_i-z_j)^2=s^T(I+L)^{-1}L(I+L)^{-1}s$; \vspace{-0.5mm}
    \item \textbf{Polarization:} $\mathcal{P}(G, s)=\sum_{i\in V} (z_i-\Bar{z})^2=\Tilde{z}^T\Tilde{z}=\Tilde{s}^T(I+L)^{-2}\Tilde{s}$;\vspace{-0.5mm}
    \item \textbf{Conflict:} $\mathcal{C}(G, s)=\mathcal{D}(G, s)+\mathcal{P}(G, s)=s^T(I+L)^{-1}s$; 
\end{itemize}\vspace{-2mm}
where the mean $\Bar{z} = \frac{\sum_{i=0}^{|V|}z_i}{|V|}$, and zero-centered vector $\Tilde{z} = z-\Bar{z}$; likewise for $\Bar{s}, \Tilde{s}$. In fact, all $s$ on the right-hand-side above can be replaced by $\Tilde{s}$ and the equations still hold \cite{musco2018minimizing}. Again, for simplicity we follow the convention to assume $s$ to be always zero-centered. The tilde accent is thus omitted.


\subsection{Spanning Rooted Forest}\vspace{-2mm}
A \textit{forest} is an acyclic graph. A \textit{tree} is a connected forest. A \textit{rooted tree} is a tree with one marked node as its root. A \textit{rooted forest} is a forest with one marked node in each of its component. In other words, a rooted forest is a union of disjoint rooted trees. Given a graph $G=(V,E)$, its \textit{spanning rooted forest} is a rooted forest with node set $V$. Later in this paper, we will use various counts of spanning rooted forests to help interpret mathematical quantities arising from our analysis.

\section{Conflict Change Caused by Link Additions}\label{sec:sign}

\subsection{Link Additions Help Reduce Conflict}
We start by analyzing the effect of link additions on the conflict measure of social networks. The following theorem provides closed-form expressions of both conflict change and its expected value (over random distributions of initial opinions) caused by the addition of a new link. We assume the link to have unit weight, but the result can be easily generalized to the case of continuous weights. Surprisingly, our theorem shows that both conflict change and its expected value are non-positive terms no matter which link is to be added.

\begin{theorem}  \label{thm:cr}
    Given initial opinions $s$ and social network $G=(V, E)$ with Laplacian matrix $L$, let $G_{+e}=(V, E\cup\{e\})$ denote the new social network. The change of conflict of expressed opinions caused by adding $e$ is given by\vspace{-2mm}
    \begin{align}\label{eq:cr_abs}
        \Delta_{+e}{\mathcal{C}} = \mathcal{C}(G_{+e}, s) - \mathcal{C}(G, s) = -\frac{(z_i-z_j)^2}{1+b_e^T(I+L)^{-1}b_e} \leq 0
    \end{align}
    The topology term $L$ can be marginalized by considering initial opinions as independent samples from a random distribution with finite variance, \textit{i.e.} $s_i\sim \mathcal{D}(0, \sigma^2)\;\; iid.$ the expected conflict change can be expressed as:\vspace{-2mm}
    \begin{align}\label{eq:cr_exp}
        \Delta_{+e}\mathbb{E}_s[\mathcal{C}] &= \mathbb{E}_{s\sim\mathcal{D}(0, \sigma^2)}[\mathcal{C}(G_{+e}, s) - \mathcal{C}(G, s)]= -\frac{\sigma^2|(I+L)^{-1}b_e|_2^2}{1+ b_e^T(I+L)^{-1}b_e} \leq 0 
    \end{align}
\end{theorem}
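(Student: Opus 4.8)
The plan is to reduce the entire statement to a rank-one perturbation of the matrix $M := I+L$ and then invoke the Sherman--Morrison formula. The first thing I would establish is that adding the unit-weight edge $e=(i,j)$ updates the Laplacian by exactly the rank-one term $b_e b_e^T$: the diagonal entries at $i$ and $j$ each gain $1$ from the degree increments, while the $(i,j)$ and $(j,i)$ off-diagonals each lose $1$, and this is precisely the entry pattern of $b_e b_e^T$. Hence $I+L_{+e} = M + b_e b_e^T$, so the whole conflict change is governed by how a single rank-one bump perturbs $M^{-1}$.

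With this in hand, the derivation of \eqref{eq:cr_abs} is a direct application of Sherman--Morrison. Writing $\mathcal{C}(G,s)=s^T M^{-1} s$, I would use
\begin{align*}
(M+b_e b_e^T)^{-1} = M^{-1} - \frac{M^{-1} b_e\, b_e^T M^{-1}}{1 + b_e^T M^{-1} b_e},
\end{align*}
so that $\Delta_{+e}\mathcal{C} = s^T\big[(M+b_eb_e^T)^{-1}-M^{-1}\big]s = -\frac{(b_e^T M^{-1} s)^2}{1+b_e^T M^{-1} b_e}$. The last step is to identify the two scalars: since $M^{-1}s=(I+L)^{-1}s=z$ is exactly the equilibrium opinion vector and $b_e$ has entries $+1$ at $i$ and $-1$ at $j$, we get $b_e^T M^{-1} s = b_e^T z = z_i - z_j$, producing the stated numerator $(z_i-z_j)^2$. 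Non-positivity then follows because $L$ is a graph Laplacian and hence positive semidefinite, so $M=I+L$ is positive definite with positive-definite inverse; therefore $b_e^T M^{-1} b_e > 0$, the denominator exceeds $1$, and the numerator is a perfect square.

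For the expected version \eqref{eq:cr_exp}, I would take the expectation of the closed form just derived. The denominator does not depend on $s$, so it factors out, and the task reduces to evaluating $\mathbb{E}_s[(z_i-z_j)^2]=\mathbb{E}_s[(b_e^T M^{-1} s)^2]$. Setting $v := M^{-1} b_e = (I+L)^{-1} b_e$ and using the symmetry of $M^{-1}$, this equals $v^T \mathbb{E}_s[s s^T]\, v$. The i.i.d.\ zero-mean, variance-$\sigma^2$ assumption gives $\mathbb{E}_s[ss^T] = \sigma^2 I$, so $\mathbb{E}_s[(z_i-z_j)^2] = \sigma^2 |v|_2^2 = \sigma^2 |(I+L)^{-1} b_e|_2^2$, which slots in to yield the claimed expression; non-positivity is then immediate.

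I do not anticipate a genuine obstacle, since the computation is mechanical once the rank-one structure is recognized. The two places that deserve care are the reductions that make the formulas clean: the rank-one identity $I+L_{+e}=M+b_eb_e^T$ and the identification $b_e^T M^{-1} s = z_i-z_j$. For full rigor I would also confirm that the Sherman--Morrison denominator never vanishes, which here is guaranteed by the positive definiteness of $M$ rather than assumed.
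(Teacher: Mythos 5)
Your proposal is correct and takes essentially the same route as the paper's own proof: Sherman--Morrison applied to the rank-one update $I+L_{+e}=(I+L)+b_eb_e^T$, the identification $b_e^T(I+L)^{-1}s = z_i-z_j$, and $\mathbb{E}_s[ss^T]=\sigma^2 I$ to obtain the expected version. If anything, your write-up is slightly more careful than the paper's, since you explicitly verify the rank-one structure of the Laplacian update and note that positive definiteness of $I+L$ guarantees the Sherman--Morrison denominator never vanishes.
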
\vspace{-2mm}
The marginalization of $L$ in expected conflict change allows us to focus on the effect of network structure when considering conflict change. It also reflects the fact that people may hold different initial opinions $s$ upon different controversial topics. We also computationally validate the theorem, especially to check that the conflict change caused by link additions is indeed non-positive. See Appendix \ref{sec:exp_more} for more details.

\subsection{Network Connectivity Helps Contract Conflict}
The addition of links always improves the connectivity of a social network. Therefore, to understand the effects of link additions on conflict, it also helps to examine what network connectivity implies about conflict in general. Here is an analysis that shows having opinions propagate on a better-connected social network helps ``contract'' more conflict. The setup is as follows.

We create a ``control group'' where the effect of idea exchange over social networks gets eliminated: imagine if the same group of people that get studied are instead totally disconnected with each other, their expressed opinions $z$ would stay consistent with their initial opinions $s$ since no pressure is felt from the outside; meanwhile, their disagreement term no longer exists because the term is defined only for connected people. Therefore, the conflict of this control group $\mathcal{C}(G_0, s)$ would just be $s^Ts$, polarization of initial opinions; here $G_0=(V, \emptyset)$. Corresponding to this control group is the ``treatment group'' where opinions propagate on the network, with conflict rate $\mathcal{C}(G_0, s)=s^T(I+L)^{-1}s$. We now compare the two groups and have the following conflict contraction theorem:

\begin{theorem}\label{thm:contract}
Given initial opinions $s$ and social network $G=(V, E)$ with Laplacian matrix $L$, we can bound the ratio of conflict between the control and the treatment group :\vspace{-1mm}
    \begin{align}
        1+\max_{(i,j)\in E}(d_i+d_j)\geq \frac{\mathcal{C}(G_0,s)}{\mathcal{C}(G, s)} \geq 1+\frac{1}{2}d_{\min}h^2_G \geq 1
    \end{align}\vspace{-1mm}where $d_i$, $d_j$ are degrees of nodes $i, j$; $d_{min}$ is $G$'s minimum node degree; $h_G$ is G's Cheeger constant.
\end{theorem}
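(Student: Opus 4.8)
The plan is to recognize the ratio as (the reciprocal of) a Rayleigh quotient of $(I+L)^{-1}$ and to sandwich it between the extreme Laplacian eigenvalues, after which each side becomes a classical spectral estimate. Since $G_0$ is edgeless its Laplacian is $0$, so $\mathcal{C}(G_0,s)=s^Ts$ and the quantity to control is $\rho:=s^Ts\big/\big(s^T(I+L)^{-1}s\big)$. Diagonalizing $L=\sum_i\lambda_i u_iu_i^T$ with $0=\lambda_1\le\lambda_2\le\cdots\le\lambda_{\max}$ and $u_1\propto\mathbf{1}$, I would use that $s$ is zero-centered (hence $s\perp\mathbf{1}$, by the paper's convention) to expand $s$ in $\{u_i\}_{i\ge2}$ and note that $(I+L)^{-1}$ acts as $1/(1+\lambda_i)$ on each mode. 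The one-line estimate $\tfrac{1}{1+\lambda_{\max}}s^Ts\le s^T(I+L)^{-1}s\le\tfrac{1}{1+\lambda_2}s^Ts$ then yields $1+\lambda_2\le\rho\le1+\lambda_{\max}$. It therefore suffices to prove $\lambda_{\max}\le\max_{(i,j)\in E}(d_i+d_j)$ and $\lambda_2\ge\tfrac12 d_{\min}h_G^2$.

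For the upper bound, $\lambda_{\max}(L)\le\max_{(i,j)\in E}(d_i+d_j)$ is the classical Anderson--Morley estimate, which I would either cite or reproduce via its short spectral argument. This immediately gives the left inequality $\rho\le 1+\max_{(i,j)\in E}(d_i+d_j)$.

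The lower bound is a Cheeger inequality and is where the real work lies. Let $f$ be a unit eigenvector for $\lambda_2$, so $Lf=\lambda_2 f$ and $f\perp\mathbf 1$; replacing $f$ by $-f$ if needed, let $g=f_+=\max(f,0)$ be its positive part, chosen so that its support has volume at most $\tfrac12\mathrm{vol}(V)$ (possible because the positive and negative supports are disjoint). I would establish two facts. First, $g^TLg\le\lambda_2\, g^Tg$: edgewise one checks $(g_i-g_j)^2\le(f_i-f_j)(g_i-g_j)$, and summing gives $g^TLg\le g^TLf=\lambda_2\, g^Tg$. Second, a coarea/level-set argument together with Cauchy--Schwarz gives the normalized Cheeger bound $g^TLg\big/g^TDg\ge\tfrac12 h_G^2$, where $h_G$ is the (volume-normalized) conductance; here every superlevel set of $g$ has volume $\le\tfrac12\mathrm{vol}(V)$, which is exactly what licenses the isoperimetric inequality $|E(S,\bar S)|\ge h_G\,\mathrm{vol}(S)$. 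Chaining these with $g^TDg\ge d_{\min}g^Tg$ yields $\lambda_2\ge g^TLg/g^Tg\ge d_{\min}\,(g^TLg/g^TDg)\ge\tfrac12 d_{\min}h_G^2$, and the trailing $\ge1$ is immediate.

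I expect the Cheeger step to be the main obstacle, for two reasons: getting the degree factor to land as $d_{\min}$ (rather than $d_{\max}$) forces the volume-normalized definition of $h_G$ and the split into the normalized Cheeger inequality followed by the crude bound $g^TDg\ge d_{\min}g^Tg$; and the reduction from the sign-changing Fiedler vector to a one-sided nonnegative test function whose support controls all level sets requires care. I would also flag the degenerate case where $G$ is disconnected: then $\lambda_2=0$ and $h_G=0$, and all three inequalities collapse consistently to the trivial $\rho\ge1$.
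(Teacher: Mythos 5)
Your proposal follows the paper's proof essentially step for step: the paper likewise diagonalizes $L$, uses the zero-centering of $s$ to kill the $\lambda_1=0$ mode, sandwiches the ratio between $1+\lambda_2$ and $1+\lambda_n$, and then invokes the Anderson--Morley bound $\lambda_n\le\max_{(i,j)\in E}(d_i+d_j)$ and the bound $\lambda_2\ge\tfrac{1}{2}d_{\min}h_G^2$ (the latter cited as Lemma A.1 of Racz--Rigobon rather than proved). The only difference is that you re-derive that Cheeger-type lower bound via the standard positive-part-of-the-Fiedler-vector argument instead of citing it; your derivation is correct, makes the proof self-contained, and your remark that the volume-normalized (conductance) definition of $h_G$ is what makes the factor land as $d_{\min}$ is consistent with the lemma the paper relies on.
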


\noindent Theorem \ref{thm:contract} shows the range of the influence that a social network can have on public opinion conflict. Both bounds are expressed in relation to network connectivity measures. Remarkably, the lower bound $\leq 1$ suggests that facilitating idea exchange almost always contracts conflict, and the range of contraction rate depends on the connectivity bottleneck $d_{\min}$ and $h_G$. In general, a larger $h_G$, meaning a better-connected network with less bottleneck, leads to a larger contraction rate --- a more ideal case in terms of public benefit. See Appendix \ref{sec:exp_more} for computational validations of the theorem.



\section{Interpreting Features of Conflict-Minimizing Links}\label{sec:mismatch1}

Last section shows that recommending new links to people helps reduce opinion conflict in general. We now proceed to investigate how different links may reduce different amount of opinion conflict. A natural question in that regard is \textbf{(Q1)} how we can characterize the \textit{conflicting minimization} feature (\textit{i.e.} reducing most conflict) of social links, especially in terms of their relationship with the \textit{relevance} feature (\textit{i.e.} the likelihood that users will accept and like the recommended link). 

\subsection{Conflict-Minimizing Links}
Our characterization of conflict-minimizing links is extended from Theorem \ref{thm:cr}: $\Delta_{+e}{\mathcal{C}} = -\frac{(z_i-z_j)^2}{1+b_e^T(I+L)^{-1}b_e}$ and $\Delta_{+e}\mathbb{E}_s[\mathcal{C}]=-\frac{\sigma^2|(I+L)^{-1}b_e|_2^2}{1+ b_e^T(I+L)^{-1}b_e}$. It is straightforward to see that the numerator $(z_i-z_j)^2$ is the difference between expressed opinions of node $i$ and $j$. The rest two terms, $b_e^T(I+L)^{-1}b_e$ and $\sigma^2|(I+L)^{-1}b_e|_2^2$, can be interpreted by the following two theorems.

\begin{theorem}\label{thm:distance1}
    Given a social network $G$ and a link $e=(i,j)$ to add, the term $b_e^T(I+L)^{-1}b_e$ measures a type of graph distance between nodes $i, j$. The distance can be interpreted by the following quantity:
    \begin{align}\label{eq:distance1}
        b_e^T(I+L)^{-1}b_e \equiv \mathcal{N}^{-1}(\mathcal{N}_{ij}+\mathcal{N}_{ji})\vspace{-4mm}
    \end{align}\vspace{-4mm}
    \begin{itemize}[leftmargin=6mm]
        \item $\mathcal{N}$ is the total number of spanning rooted forests of $G$;\vspace{-2mm}
        \item $\mathcal{N}_{xy}$ is the total number of spanning rooted forests of $G$, in which node $x$ is the root of the tree to which $x$ belongs, and $y$ belongs to a different tree than that $x$-rooted tree;
    \end{itemize}
\end{theorem}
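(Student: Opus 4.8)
The plan is to reduce the quadratic form to individual entries of the resolvent $(I+L)^{-1}$, give each entry a combinatorial meaning through the Matrix-Forest Theorem, and then re-partition the relevant forest counts until the quantities $\mathcal{N}_{ij}$ and $\mathcal{N}_{ji}$ appear. Writing $M=(I+L)^{-1}$, which is symmetric because $L$ is symmetric, and using that $b_e$ has entry $+1$ at $i$, entry $-1$ at $j$, and zero elsewhere, I would first expand
\begin{equation}
b_e^T(I+L)^{-1}b_e = M_{ii} - M_{ij} - M_{ji} + M_{jj} = M_{ii} + M_{jj} - 2M_{ij}.
\end{equation}

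Second, I would invoke the Matrix-Forest Theorem (Chebotarev--Agaev): for any graph with Laplacian $L$, $\det(I+L)=\mathcal{N}$ is the number of spanning rooted forests, and each entry satisfies $M_{xy}=F_{y\to x}/\mathcal{N}$, where $F_{y\to x}$ denotes the number of spanning rooted forests in which vertex $y$ lies in the tree whose root is $x$. In particular the diagonal entry $M_{xx}=F_{x\to x}/\mathcal{N}$ counts the forests in which $x$ is itself a root, and the symmetry $M_{xy}=M_{yx}$ forces the combinatorial identity $F_{y\to x}=F_{x\to y}$ (which I will use below, so that the direction convention ultimately does not matter).

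Third --- the combinatorial core --- I would re-express $\mathcal{N}_{ij}$ and $\mathcal{N}_{ji}$. Let $R_i$ be the set of spanning rooted forests in which $i$ is a root, so $|R_i|=F_{i\to i}$. Every forest in $R_i$ has $j$ either in $i$'s tree or in a strictly different tree, and these cases are disjoint and exhaustive. The first case is exactly the event ``$j$ lies in the tree rooted at $i$,'' counted by $F_{j\to i}$ (note this event already implies $i$ is a root, so it is genuinely a subset of $R_i$); the second case is precisely $\mathcal{N}_{ij}$ by definition. Hence $\mathcal{N}_{ij}=F_{i\to i}-F_{j\to i}$, and symmetrically $\mathcal{N}_{ji}=F_{j\to j}-F_{i\to j}$. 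Adding these and applying $F_{j\to i}=F_{i\to j}$ gives $\mathcal{N}_{ij}+\mathcal{N}_{ji}=F_{i\to i}+F_{j\to j}-2F_{i\to j}=\mathcal{N}\,(M_{ii}+M_{jj}-2M_{ij})$; dividing by $\mathcal{N}$ and comparing with the first step yields $b_e^T(I+L)^{-1}b_e=\mathcal{N}^{-1}(\mathcal{N}_{ij}+\mathcal{N}_{ji})$, as claimed.

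I expect the main obstacle to be stating the Matrix-Forest Theorem in exactly the entrywise form required, rather than any subsequent algebra: one must pin down the convention for the off-diagonal count $F_{y\to x}$ and confirm that the partition of $R_i$ is clean. If a self-contained derivation is preferred to citing the theorem, I would instead establish the forest interpretation of the entries directly by expanding $\det(I+L)$ and its cofactors with the principal-minor (all-minors) version of the matrix-tree theorem, identifying the $I$ contributions with the choice of roots; this is more laborious but entirely routine, and the remaining partition argument is unchanged.
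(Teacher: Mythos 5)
Your proposal is correct and follows essentially the same route as the paper's proof: expand the quadratic form $b_e^T(I+L)^{-1}b_e$ into the four entries of the resolvent, invoke the Chebotarev--Agaev matrix-forest theorem for their combinatorial meaning, and regroup the forest counts into $\mathcal{N}^{-1}(\mathcal{N}_{ij}+\mathcal{N}_{ji})$. The only difference is that the paper phrases the entrywise identification via the cofactor matrix of $I+L$ and leaves the final partition step ($C_{ii}-C_{ij}=\mathcal{N}_{ij}$, etc.) implicit, whereas you spell it out explicitly.
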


\vspace{-1mm}
\noindent Together with the interpretation of $(z_i-z_j)^2$, Theorem \ref{thm:distance1} gives the two criteria for finding conflict-minimizing links over fixed initial opinions $s$: \vspace{-2mm}
\begin{enumerate}[leftmargin=5mm]
    \item The two end nodes should be as close as possible in the network (so $1+b_e^T(I+L)^{-1}b_e$ is small);\vspace{-1mm}
    \item Expressed opinions at the end nodes should be as different as possible (so $(z_i-z_j)^2$ is large);
\end{enumerate}
\vspace{-1mm}Criterion 1 is perhaps a bit surprising as one may think connecting two remote people should introduce more balanced perspectives to both of them. On the other hand, the two criteria still make much sense when viewed together: there must be something unusual about the network structure when two close friends with supposedly strong influence to each other actually hold very different opinions. The suggested conflict-minimizing link can be seen as a correction to the network structure's unusualness.

\begin{theorem}\label{thm:distance2}
    Given a social network $G=(V, E)$ and a link $e=(i,j)$ to add, the term $\sigma^2|(I+L)^{-1}b_e|_2^2$ also measures a type of graph distance between nodes $i$ and $j$. The distance can be interpreted by the following quantity:
    \begin{align}\label{eq:distance2}
        \sigma^2|(I+L)^{-1}b_e|_2^2 \equiv \sigma^2\mathcal{N}^{-2}\sum_{k \in V}(\mathcal{N}_{ik}-\mathcal{N}_{jk})^2
    \end{align}
    where $\mathcal{N}$ and $\mathcal{N}_{xy}$ follow the definitions in \textup{\textbf{Theorem \ref{thm:distance1}}}.
\end{theorem}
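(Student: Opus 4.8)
The plan is to reduce the identity to a single combinatorial dictionary between the entries of the resolvent $(I+L)^{-1}$ and counts of spanning rooted forests --- the same dictionary that already underlies Theorem \ref{thm:distance1}. First I would expand the left-hand side coordinate-wise. Since $b_e = e_i - e_j$ is the edge vector of $e=(i,j)$, the $k$-th coordinate of $(I+L)^{-1}b_e$ is $(I+L)^{-1}_{ik} - (I+L)^{-1}_{jk}$, so that
\begin{align*}
\sigma^2\big|(I+L)^{-1}b_e\big|_2^2 = \sigma^2\sum_{k\in V}\big((I+L)^{-1}_{ik} - (I+L)^{-1}_{jk}\big)^2 .
\end{align*}
The whole theorem then amounts to matching this sum, term by term in $k$, against $\sigma^2\mathcal{N}^{-2}\sum_{k}(\mathcal{N}_{ik}-\mathcal{N}_{jk})^2$.

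The key ingredient is the matrix-forest theorem, which I would establish exactly as in the proof of Theorem \ref{thm:distance1}: viewing $I+L$ as the Laplacian of $G$ augmented by a grounded universal node, the matrix-tree theorem gives $\det(I+L)=\mathcal{N}$, and the all-minors version gives $(I+L)^{-1}_{xy}=\mathcal{S}_{xy}/\mathcal{N}$, where $\mathcal{S}_{xy}$ denotes the number of spanning rooted forests in which $x$ and $y$ lie in the same tree and $x$ is its root (a quantity symmetric in $x,y$ via re-rooting, consistent with the symmetry of $(I+L)^{-1}$). Note $\mathcal{S}_{xx}$ is the number of forests in which $x$ is a root, and partitioning such forests according to whether $y$ lands in $x$'s tree gives the relation $\mathcal{N}_{xy}=\mathcal{S}_{xx}-\mathcal{S}_{xy}$. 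Substituting $(I+L)^{-1}_{ik}-(I+L)^{-1}_{jk}=(\mathcal{S}_{ik}-\mathcal{S}_{jk})/\mathcal{N}$ turns the target into comparing $\sum_k(\mathcal{S}_{ik}-\mathcal{S}_{jk})^2$ with $\sum_k(\mathcal{N}_{ik}-\mathcal{N}_{jk})^2$.

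The main obstacle is precisely this last comparison. The natural forest reading of each coordinate produces the \emph{same-tree} counts $\mathcal{S}_{ik}-\mathcal{S}_{jk}$, whereas the statement is phrased with the \emph{separated} counts $\mathcal{N}_{ik}-\mathcal{N}_{jk}$, and converting between them via $\mathcal{N}_{xy}=\mathcal{S}_{xx}-\mathcal{S}_{xy}$ introduces node-dependent diagonal constants $\mathcal{S}_{ii}-\mathcal{S}_{jj}$. The delicate point is to verify that these constants wash out of the squared sum. The tool I would use is the row-sum identity $(I+L)^{-1}\mathbf{1}=\mathbf{1}$ (which follows from $L\mathbf{1}=0$), giving $\sum_k\big((I+L)^{-1}_{ik}-(I+L)^{-1}_{jk}\big)=0$; expanding $\sum_k(\mathcal{N}_{ik}-\mathcal{N}_{jk})^2$ with $\mathcal{N}_{ik}-\mathcal{N}_{jk}=(\mathcal{S}_{ii}-\mathcal{S}_{jj})-(\mathcal{S}_{ik}-\mathcal{S}_{jk})$, this identity annihilates the cross term. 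I expect the residual constant contribution $|V|(\mathcal{S}_{ii}-\mathcal{S}_{jj})^2$ to be the crux that must be controlled for the stated equality to close exactly; pinning down its treatment (e.g. showing it is absorbed or arguing the relevant diagonal entries coincide) is the step I anticipate spending the most care on, and it is where I would scrutinize the authors' own argument most closely.
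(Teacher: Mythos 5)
Your reduction follows the same route as the paper's own proof: expand the norm coordinate-wise and apply the matrix-forest dictionary $\det(I+L)=\mathcal{N}$, $(I+L)^{-1}_{xy}=\mathcal{S}_{xy}/\mathcal{N}$ (the paper writes $\mathcal{S}_{xy}$ as the cofactor $C_{xy}$). But the obstacle you flagged at the end is not a technicality to be smoothed over --- it is a genuine defect in the stated theorem, and your residual computation is exactly the right diagnosis. Using $\mathcal{N}_{xk}=\mathcal{S}_{xx}-\mathcal{S}_{xk}$ and the row-sum identity $(I+L)^{-1}\mathbf{1}=\mathbf{1}$ to kill the cross term, one gets precisely
\begin{align*}
\sum_{k\in V}\bigl(\mathcal{N}_{ik}-\mathcal{N}_{jk}\bigr)^2
=|V|\bigl(\mathcal{S}_{ii}-\mathcal{S}_{jj}\bigr)^2
+\sum_{k\in V}\bigl(\mathcal{S}_{ik}-\mathcal{S}_{jk}\bigr)^2 ,
\end{align*}
and the residual $|V|(\mathcal{S}_{ii}-\mathcal{S}_{jj})^2$ does not vanish in general, so Eq.~(\ref{eq:distance2}) is false as written. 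Concretely, take $G$ to be the 4-node path $1$--$2$--$3$--$4$ and $e=(1,3)$. Then $\mathcal{N}=21$ and $(I+L)^{-1}b_e=\frac{1}{21}(11,1,-8,-4)^T$, so the left-hand side equals $202\,\sigma^2/441$; but $(\mathcal{N}_{1k})_{k}=(0,8,11,12)$ and $(\mathcal{N}_{3k})_{k}=(8,6,0,5)$, so the right-hand side equals $238\,\sigma^2/441$. The gap, $36\,\sigma^2/441$, is exactly your residual: $|V|(\mathcal{S}_{11}-\mathcal{S}_{33})^2/\mathcal{N}^2=4\cdot(13-10)^2/441$.

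The paper's own proof slips at precisely the point you said you would scrutinize. It asserts ``$C_{ik}+\mathcal{N}_{ik}=C_{ki}+\mathcal{N}_{ki}=C_{kk}$''; the second equality is correct (partition the forests in which $k$ is a root according to whether $i$ lies in $k$'s tree), but the first tacitly assumes that $\mathcal{N}$ is symmetric, which it is not ($\mathcal{N}_{13}=11\neq 8=\mathcal{N}_{31}$ in the example above). What the valid half actually yields is $\mathcal{N}_{ki}-\mathcal{N}_{kj}=C_{kj}-C_{ki}=-(\mathcal{S}_{ik}-\mathcal{S}_{jk})$ for every $k$, a term-by-term identity with the index roles transposed, and hence the correct statement is
\begin{align*}
\sigma^2\bigl|(I+L)^{-1}b_e\bigr|_2^2
=\sigma^2\mathcal{N}^{-2}\sum_{k\in V}\bigl(\mathcal{N}_{ki}-\mathcal{N}_{kj}\bigr)^2 ,
\end{align*}
in which $k$ is the root and $i,j$ are the separated nodes (equivalently, keep the printed formula but redefine $\mathcal{N}_{xy}$ with the root at $y$). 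Under this transposition your argument closes with no residual at all --- the diagonal constants never enter --- and indeed $\sum_k(\mathcal{N}_{k1}-\mathcal{N}_{k3})^2=202$ in the example. Theorem \ref{thm:distance1} is unaffected since it only uses the symmetrized sum $\mathcal{N}_{ij}+\mathcal{N}_{ji}$, but Corollary \ref{crl:distance2} inherits the same index error and needs the same fix; the qualitative ``global position gap'' interpretation survives. In short: your plan is sound, and the step you declined to wave through is exactly where the theorem and its published proof break.
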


\noindent Similar to Theorem \ref{thm:distance1}, Theorem \ref{thm:distance2} also explains $\sigma^2|(I+L)^{-1}b_e|_2^2$ by a type of graph distance. However, note that there is a subtle difference between the two types of distance. The subtlety is especially important to distinguish because $\Delta_{+e}\mathbb{E}_s[\mathcal{C}]$ is the ratio between the two terms.

\begin{corollary}\label{crl:distance2}
    $\Delta_{+e}\mathbb{E}_s[\mathcal{C}]$ can be completely expressed by counts of different types of spanning rooted forests as defined in Theorem \ref{thm:distance1}
    \begin{align}
        \Delta_{+e}\mathbb{E}_s[\mathcal{C}] \equiv -\sigma^2\mathcal{N}^{-1}(\mathcal{N}+\mathcal{N}_{ij}+\mathcal{N}_{ji})^{-1}\sum_{k\in V}(\mathcal{N}_{ik}-\mathcal{N}_{jk})^2
    \end{align}
\end{corollary}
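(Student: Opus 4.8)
The plan is to establish the identity by pure substitution, since this corollary merely fuses the three preceding results and introduces no new machinery of its own. I would start from the expected-conflict-change expression of Theorem \ref{thm:cr},
\begin{align}
\Delta_{+e}\mathbb{E}_s[\mathcal{C}] = -\frac{\sigma^2|(I+L)^{-1}b_e|_2^2}{1+ b_e^T(I+L)^{-1}b_e},
\end{align}
which already presents the quantity as a ratio of exactly the two matrix terms that Theorems \ref{thm:distance1} and \ref{thm:distance2} have reinterpreted combinatorially.

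First I would replace the numerator using Theorem \ref{thm:distance2}, writing $\sigma^2|(I+L)^{-1}b_e|_2^2 = \sigma^2\mathcal{N}^{-2}\sum_{k\in V}(\mathcal{N}_{ik}-\mathcal{N}_{jk})^2$. Next I would rewrite the denominator using Theorem \ref{thm:distance1}: substituting $b_e^T(I+L)^{-1}b_e = \mathcal{N}^{-1}(\mathcal{N}_{ij}+\mathcal{N}_{ji})$ and placing the constant $1$ over the common factor $\mathcal{N}$ gives $1 + b_e^T(I+L)^{-1}b_e = \mathcal{N}^{-1}(\mathcal{N} + \mathcal{N}_{ij} + \mathcal{N}_{ji})$.

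Finally I would simplify the ratio. Dividing the numerator's $\mathcal{N}^{-2}$ by the denominator's $\mathcal{N}^{-1}$ leaves a single factor $\mathcal{N}^{-1}$, and the aggregate count $(\mathcal{N}+\mathcal{N}_{ij}+\mathcal{N}_{ji})$ passes to the denominator as its reciprocal; carrying the overall minus sign through yields exactly $-\sigma^2\mathcal{N}^{-1}(\mathcal{N}+\mathcal{N}_{ij}+\mathcal{N}_{ji})^{-1}\sum_{k\in V}(\mathcal{N}_{ik}-\mathcal{N}_{jk})^2$, as claimed. I expect no real obstacle beyond careful bookkeeping of the powers of $\mathcal{N}$; all the mathematical content sits in Theorems \ref{thm:distance1} and \ref{thm:distance2}, and the corollary's sole contribution is the observation that their two distance interpretations combine into a single fully combinatorial expression for the expected conflict change.
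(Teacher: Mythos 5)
Your proposal is correct and matches the paper's own proof, which likewise establishes the corollary by direct substitution of the identities from Theorems \ref{thm:distance1} and \ref{thm:distance2} into the expected-conflict-change formula of Theorem \ref{thm:cr}. If anything, your bookkeeping is slightly more careful than the paper's one-line proof, which cites Eq.~(\ref{eq:cr_abs}) where it evidently means Eq.~(\ref{eq:cr_exp}) --- the equation you correctly start from.
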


\begin{figure*}
    \centering
    \includegraphics[scale=0.39]{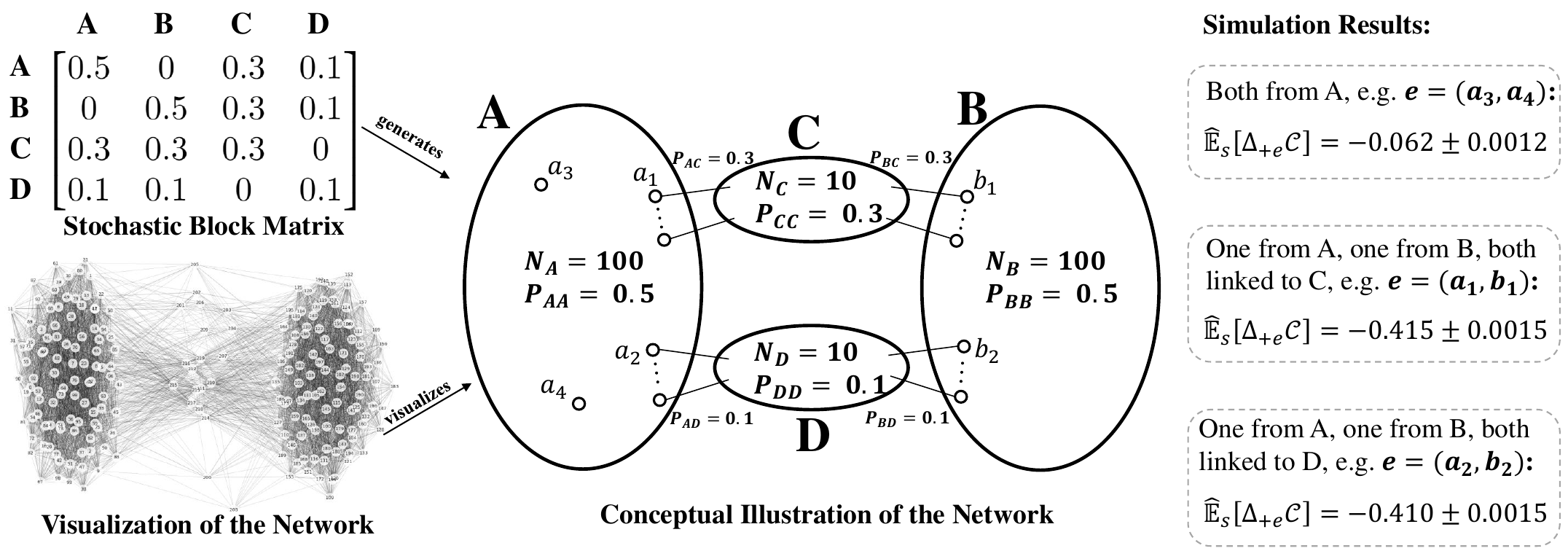}\vspace{-5mm}
    \caption{A barbell-like social network with cluster and bridge structures, generated from stochastic block model. Different groups of links have different structural features, producing different expected conflict change when added to network, shown in the right panel. The sample means and their $95\%$-intervals are reported based on repeated simulations. Note that this example gives a special network for illustrative purpose,  but our interpretations of Thrm.\ref{thm:distance1}, \ref{thm:distance2} and Cor.\ref{crl:distance2} apply to any networks.}
    \label{fig:simulation}
\end{figure*}

\vspace{-2mm}\noindent We use Corollary \ref{crl:distance2} to give intuitive interpretations of $\Delta_{+e}\mathbb{E}_s[\mathcal{C}]$. First, notice that given a social network $G$, the number of $G$'s spanning rooted forests $\mathcal{N}$ in the denominator is a  constant. Therefore, the competing terms in $\Delta_{+e}\mathbb{E}_s[\mathcal{C}]$ are $\mathcal{N}_{ij}+\mathcal{N}_{ji}$ and $\sum_{k \in V}(\mathcal{N}_{ik}-\mathcal{N}_{jk})^2$. According to Theorem \ref{thm:distance1}, $\mathcal{N}_{xy}$ essentially measures the distance between $x$ and $y$ by counting $x$-rooted spanning forests that separate $x$ and $y$ into different components. Therefore, $\mathcal{N}_{ij}+\mathcal{N}_{ji}$ emphasizes ``local disconnectedness'' between $i$ and $j$, while $\sum_{k \in V}(\mathcal{N}_{ik}-\mathcal{N}_{jk})^2$ emphasizes $i$ and $j$'s ``global position gap'' - ``global'' means that $i$ and $j$'s position is defined relative to (every node of) the entire network. The following example further illustrates their difference.

\begin{example}\label{example:simulation}

Consider a random network generated from the stochastic block model with node partitions $[N_A, N_B, N_C, N_D]$ = $[100, 100, 10, 10]$ and block matrix shown in Figure \ref{fig:simulation} left. The diagram in Figure \ref{fig:simulation} middle illustrates the network: A, B are the two main clusters with high link density (0.5); the other two clusters, C, D, are much smaller and have lower link densities (0.1, 0.3) both internally and to A, B. Structurally speaking, C, D serve as two ``bridges'' between A and B.

\vspace{-1mm}Now consider adding links to three groups of previously disconnected node pairs defined as:\vspace{-2mm}
\begin{itemize}[leftmargin=4mm]
    \item \textbf{Group 1:} both end nodes in Cluster A, \textit{e.g.} $(a_3, a_4)$\vspace{-2mm}
    \item \textbf{Group 2:} one end node in Cluster A, one end node in cluster B, both linked to some other node(s) in cluster C,  \textit{e.g.} $(a_1, b_1)$\vspace{-2mm}
    \item \textbf{Group 3:} one end node in Cluster A, one end node in cluster B, both linked to some other node(s) in cluster D, \textit{e.g.} $(a_2, b_2)$
\end{itemize}


\end{example}\vspace{-2mm}
Our simulation shows that Group 1 introduce least conflict reduction on average ---  in fact much less than the other two groups do. Since each pair of nodes in Group 1 are from the same densely connected cluster, it means that the numerator term $\sum_{k \in V}(\mathcal{N}_{ik}-\mathcal{N}_{jk})^2$ dominates $\Delta_{+e}\mathbb{E}_s[\mathcal{C}]$. Therefore, we may conclude that in general a link reduces more conflict if it involves two nodes that are globally distant. Comparing Group 2 and 3, we can further see that node pairs in Group 2 have stronger local connectivity than those in Group 3 (``bridge'' C has higher link densities than ``bridge'' D). The fact that Group 2 reduces more conflict on average shows that stronger local connectivity actually positively contribute to conflict minimization.

We have now found the following two features of conflict-minimizing links, when initial opinions $s$ are sampled from a distribution with finite variance:\vspace{-2mm}
\begin{enumerate}[leftmargin=6mm]
    \item Globally, both end nodes should belong to different clusters,  so that $\sigma^2|(I+L)^{-1}b_e|_2^2$ is large;\vspace{-1mm}
    \item Locally, both end nodes should be decently well-connected with each other, so that $1+b_e^T(I+L)^{-1}b_e$ is small.
\end{enumerate}

\subsection{Relating Conflict Minimization to Relevance}
Relevant links are links that are likely to be accepted by users. Over the past two decades, relevant links have been extensively studied as the core subject of link prediction problem in many different contexts \cite{hasan2011survey,liben2003link,daud2020applications}. We know that the relevance of social links has strong correlation with small graph distance between the two end nodes\cite{liben2003link,opsahl2013triadic, wang2020generic, wang2021inductive, li2020distance,yin2020revisiting, wang2021tedic}. Comparing this existing knowledge with our characterizations of conflict-minimizing links, it is not hard to perceive that relevance and conflict minimization are not always incompatible with each other. Instead, they can have a decent degree of alignment in some cases.






\section{Measuring the Degree of Alignment Between Relevance and Conflict Minimization on Real-World Data}\label{sec:mismatch2}\vspace{-2mm}
Sec. \ref{sec:mismatch1}'s analysis shows that the two features of relevance and conflict minimization are \textit{not} strictly incompatible with each other. This section discusses how we can measure the two features' degree of alignment on real-world data.



\subsection{Definition of Conflict Awareness}\label{subsec:ca_def}\vspace{-1mm}
We start by formulating link additions: A link addition function $f$ is defined as: $f(e; G, \beta): (V\times V) \rightarrow [0, +\infty)$, where the function parameters are a given social network $G=(V, E)$ and a budget $\beta$ for adding links. $f$ is defined on node pairs, subject to the budget constraint $\sum_{e\in V\times V} f(e; G, \beta)=\beta$. 

Among all possible link addition functions, there is a conflict-minimizing function $f^*(e; G, \beta)$, which is the function that reduces the most conflict under budget $\beta$. We use $\Delta_{f}\mathcal{C}$ and $\Delta_{f}\mathbb{E}_s[\mathcal{C}]$ to denote the conflict change and expected conflict change caused by applying $f$ over the network $G$. The two terms are related by $\Delta_{f}\mathbb{E}_s[\mathcal{C}] \equiv \int_s\rho(s)\;\Delta_f\mathcal{C}\;d_s$, where $\rho(s)$ is the probability density function of $s$. We further use $L_f$ to denote the Laplacian of the network formed by only the new links added by $f$.

\begin{definition}\label{def:ca}
    Given a social network $G$, initial opinions $s$, and a positivie budget $\beta$, the conflict awareness (CA) of a link addition function $f(e; G, \beta)$ is defined by  the conflict reduced by applying $f$ to add links, divided by the best possible conflict reduction by applying $f^*$ to add links:
    \begin{align}
        \textbf{CA}(f) = \Delta_{f}\mathcal{C}\;/\;\Delta_{f^*}\mathcal{C}
    \end{align}
    where \vspace{-3mm}
    \begin{align}
        &&\;\;\;\Delta_{f}\mathcal{C}\;=\;  s^T(I&+L+L_f)^{-1}s-s^T(I+L)^{-1}s\;\;\;\;\;\;\;\;\label{eq:ca1_1}\\
        && \Delta_{f^*}\mathcal{C} \;=\; \min_{L_f} &\;\;\;\; \Delta_{f}\mathcal{C}\;\;\;\;\;\;\;\;\label{eq:opt}\\
    &&\textup{subject to}  &\;\;\;\;  L_f \in \mathcal{L}\text{ (Laplacian constraint)}\label{eq:laplacian_constraint}\;\;\;\;\;\;\;\;\\
    && &\;\;\;\; \textup{Tr}(L_f) \leq 2\beta\text{ (budget constraint)} \;\;\;\;\;\;\;\;\label{eq:budget_constraint}
    \end{align}
\end{definition}
$L+L_f$ is the Laplacian matrix of the network after being modified by $f$, so the definition of $\Delta_{f}\mathcal{C}$ is straightforward. $\Delta_{f^*}\mathcal{C}$ is defined as the objective of an optimization problem, which essentially looks for the best network with total link weights $\beta$ to be superimposed over the original network $G$.

\vspace{-0.5mm}
The measure conflict awareness is useful for two reasons. First, CA essentially measures how ``bad'' any link recommendation algorithm is with regards to minimizing opinion conflict. By assigning $f$ a function that recommends relevant links, CA immediately becomes a quantifier 
that shows how much mismatch exists between $f$-suggested relevant links and conflict-minimizing links. Second, CA is a better measure than the pure conflict change $\Delta_{f}\mathcal{C}$. This is because CA is normalized to $[0,1]$, which allows meaningful comparisons across different social networks. We will see that this property becomes especially helpful in practice for characterizing a link recommendation algorithm. 

We can further show that CA also has a computationally desirable feature of being convex.
\begin{proposition}\label{prop:ca1}
     In Definition \ref{def:ca}, $\Delta_{f^*}\mathcal{C}$ is the objective of a convex optimization problem.
\end{proposition}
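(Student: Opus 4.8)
The plan is to treat $\Delta_{f^*}\mathcal{C}$ as the optimal value of a minimization whose objective is a \emph{matrix-fractional} function composed with an affine map, and whose feasible set is cut out by linear (in)equalities. First I would note that in $\Delta_f\mathcal{C} = s^T(I+L+L_f)^{-1}s - s^T(I+L)^{-1}s$ only the first summand depends on the decision variable $L_f$; the term $s^T(I+L)^{-1}s$ is a fixed constant. Hence minimizing $\Delta_f\mathcal{C}$ over $L_f$ is equivalent to minimizing $g(L_f) := s^T(I+L+L_f)^{-1}s$, and it suffices to show that $g$ is convex and that the constraint set is convex.

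For the objective, I would first record that for every feasible $L_f$ the matrix $X := I+L+L_f$ is symmetric positive definite: both $L$ and $L_f$ are graph Laplacians and hence positive semidefinite, so $X \succeq I \succ 0$. The map $L_f \mapsto X$ is affine, and a convex function precomposed with an affine map is convex, so it is enough to prove that $X \mapsto s^T X^{-1} s$ is convex on the cone of positive definite matrices. The cleanest route is the Schur-complement description of its epigraph: for $X \succ 0$,
\[
  s^T X^{-1} s \leq t \iff \begin{pmatrix} X & s \\ s^T & t \end{pmatrix} \succeq 0 .
\]
The right-hand condition is a linear matrix inequality in $(X,t)$, so its solution set --- and therefore the epigraph of $X \mapsto s^T X^{-1}s$ intersected with $\{X \succ 0\}$ --- is convex; this is exactly convexity of the function. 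Composing with $L_f \mapsto I+L+L_f$ then gives convexity of $g$.

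It remains to verify that the feasible region is convex. The budget constraint $\mathrm{Tr}(L_f) \leq 2\beta$ is a single linear inequality in the entries of $L_f$, hence a halfspace. The Laplacian constraint $L_f \in \mathcal{L}$ requires $L_f$ to be a nonnegatively weighted graph Laplacian, i.e. a symmetric matrix with $(L_f)_{ij} \leq 0$ for $i \neq j$ and $L_f \mathbf{1} = 0$; these are all linear equalities and inequalities, so $\mathcal{L}$ is a polyhedral (hence convex) cone. The feasible set is the intersection of these convex sets and is therefore convex, and a convex objective minimized over a convex set is by definition a convex optimization problem, which establishes the proposition.

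The only non-routine ingredient is the convexity of the quadratic-over-affine objective; everything else is bookkeeping about which constraints are linear. I would anchor that step on the Schur-complement identity above, but as a self-contained alternative one can restrict to an arbitrary line $X(\tau) = X_0 + \tau H$ with $H$ symmetric and $X_0 \succ 0$, and differentiate $\phi(\tau) = s^T X(\tau)^{-1} s$ twice to obtain $\phi''(\tau) = 2\,s^T X^{-1} H X^{-1} H X^{-1} s$. After the substitution $v = X^{-1/2}s$ and $M = X^{-1/2} H X^{-1/2}$ this equals $2\,v^T M^2 v = 2\,\|Mv\|^2 \geq 0$, so $\phi$ is convex along every line and $g$ is convex. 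I would keep whichever of the two arguments reads most self-containedly in the appendix.
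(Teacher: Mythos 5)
Your proof is correct, and its overall skeleton (convex objective plus convex feasible set) necessarily matches the paper's, but your key lemma is different. The paper handles the objective by citing an external result (Watkins, \emph{Example 9}): the map $X \mapsto X^{-1}$ is \emph{matrix convex} on positive definite Hermitian matrices, from which convexity of $s^T(I+L+L_f)^{-1}s$ follows by sandwiching the operator inequality with $s$. You instead prove the needed scalar convexity self-containedly, via the Schur-complement description of the epigraph,
\[
  s^T X^{-1} s \leq t \iff \begin{pmatrix} X & s \\ s^T & t \end{pmatrix} \succeq 0 ,
\]
which exhibits the epigraph as the solution set of a linear matrix inequality, with the line-restriction second-derivative computation as a backup. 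This buys self-containedness: a reader needs no familiarity with operator convexity, and you only prove the (weaker) scalar statement that is actually used, whereas the paper leans on a strictly stronger cited fact. On the feasible set, your treatment is if anything more explicit than the paper's: the paper argues that convex combinations of Laplacians are Laplacians and that the trace constraint survives convex combinations, while you identify $\mathcal{L}$ as a polyhedral cone (symmetry, nonpositive off-diagonal entries, zero row sums) intersected with the halfspace $\mathrm{Tr}(L_f) \leq 2\beta$; both are valid, and yours makes the linear structure of the constraints visible. One small observation: your opening reduction (dropping the constant term $s^T(I+L)^{-1}s$) is a nicety the paper skips, and it is harmless either way since adding a constant does not affect convexity.
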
\vspace{-1mm}
We can generalize Definition \ref{def:ca} to expectation of conflict awareness over a distribution of initial opinions $s$ and prove its convexity. See Appendix \ref{subsec:eca} for details.

\subsection{Measuring Conflict Awareness on Real-World Social Networks}
\subsubsection{Motivations and Experimental Setup}
We use the measure of conflict awareness defined in Sec. \ref{subsec:ca_def} to  empirically investigate two important questions on the relationship between relevance and conflict reduction in link recommendations:
\begin{itemize}[leftmargin=6mm]\vspace{-2mm}
    \item What is the conflict awareness of some of the popular off-the-shelf link recommendation algorithms? High conflict awareness (\textit{e.g.} close to $1.0$) means that the algorithm is effective at reducing conflict, and low conflict awareness (\textit{e.g.} less than $0.2$) means the opposite.\vspace{-1mm}
    \item Does a more accurate link recommendation algorithm have higher conflict awareness?  If we observe that for most recommendation algorithms the answer is affirmative, it means that relevance and conflict reduction are positively correlated; if we observe the opposite, it means relevance and conflict reduction are incompatible in practice.
\end{itemize}

\vspace{-1mm}\noindent \textbf{Datasets.} We use two real-world datasets: Reddit and Twitter, collected by \cite{chitra2019understanding}, one of the pioneering works that conduct empirical studies on the FJ model. See Appendix \ref{subsec:exp_data} for more details on how the network data and the initial opinions are generated for the two datasets.

\vspace{-1mm}
\noindent \textbf{Baselines.} We test three groups of 13 different link recommendation methods.
\begin{itemize}[leftmargin=5mm]\vspace{-3mm}
    \item Unsupervised distance-based measures:  Personalized PageRank \cite{bahmani2010fast}, Katz Index \cite{katz1953new}, Jaccard Index \cite{salton1983introduction}, Adamic Adar \cite{adamic2003friends}, Common Neighbors \cite{newman2001clustering}, Preferential Attachment \cite{newman2001clustering}, Resource Allocation \cite{zhou2009predicting}. They are among the most classic methods for link recommendations, whose distance-based heuristics underpin many deep-learning-based link recommendation methods.\vspace{-1mm}
    \item Self-supervised graph learning: Logistic Regression (with Node2Vec \cite{grover2016node2vec} embeddings as input node features), Graph Convolutional Neural Network (GCN) \cite{kipf2016semi}, Relational Graph Convolutional Nerual Network (R-GCN) \cite{schlichtkrull2018modeling}, Graph Transformer \cite{rampavsek2022recipe}, SuperGAT \cite{kim2022find}. The last three one are all GNN-based methods that achieved previous state-of-the-art on link prediction task.  \vspace{-1mm}
    \item Conflict minimization solver: This is to solve the convex optimization problem defined by Eq.(\ref{eq:opt})-(\ref{eq:budget_constraint}); they find optimal weights for links to be added with the sole goal of minimizing the conflict. However, it is crucial to note that this solver has total ignorance of relevance, \textit{i.e.} it can't differentiate positive and negative links apart. Therefore, it may actually end up recommending many negative (\textit{i.e.} invalid) links that have no effect on conflict.
\end{itemize}

\vspace{-1mm}
\noindent \textbf{Evaluation pipeline.} For each dataset, we randomly sample $\beta=100$ positive links from the edge set, and $\beta\cdot\eta$ negative links from all disconnected node pairs; the negative sampling rate $\eta\in[1, 10]$ is a hyperparameter. The positive links are then removed from the network to be reserved for testing, together with the negative links. It is crucial to note that in this setting (and in the real world), only a positive link can be added to the social network. If a negative link gets recommended (\textit{i.e.} assigned positive weights), it can't be added to the network. The whole process is repeated for 10 times with different random seeds. 

All links in the original network have unit weights, although the recommender is allowed to assign continuous weights to new links. As there are $\beta$ positive links, we require that the total recommended weights must sum up to $\beta$, which is enforced through linear scaling of $f$'s output. See Appendix \ref{subsec:exp_scale}.

\noindent\textbf{Evaluation metrics.} Besides conflict awareness, we also measure the recall and precision@10 of each link recommendation method as proxies for ``relevance''. 

\vspace{-1mm}
\noindent \textbf{Reproducibility:} Our code and data can be downloaded from \href{https://github.com/Abel0828/NeurIPS23-Conflict-Relevance-in-FJ}{\underline{here}}. Other configurations of the numerical experiment can be found in Appendix \ref{subsec:moreconfig}

\subsubsection{Result Analysis}
\begin{figure*}
    \centering
    \includegraphics[scale=0.24]{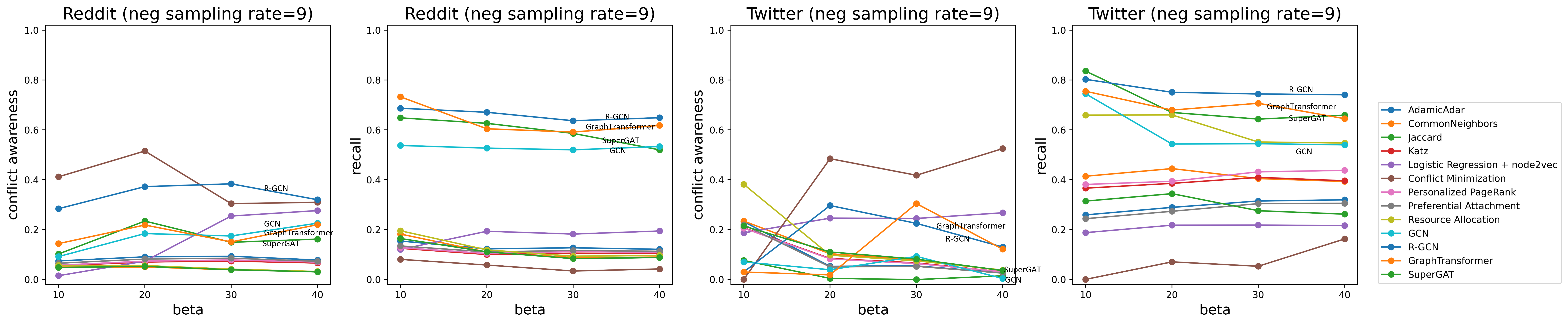}\vspace{-4mm}
    \caption{\small Measurement of conflict awareness and recall for 13 link recommendation algorithms on samples of Reddit and Twitter social networks. The x-axis is the $\eta=\frac{\#\text{negative links}}{\#\text{positive links}}$ that controls class imbalance in the test set.}\vspace{-2mm}
    \label{fig:reddit_twitter_neg}
\end{figure*}

Fig.\ref{fig:reddit_twitter_neg} shows the measurement results of  both conflict awareness and recall rate for the 13 link recommendation algorithms on the two datasets. The precision@10 measurement results are in Appendix \ref{sec:exp_more}. The y-axis is the negative sampling rate $\eta$. The results allow us to make the following observations.

First, we observe that the conflict awareness can vary a lot for different algorithms under different settings. For example, on Reddit's social network, R-GCN's conflict awareness can be as high as 0.95 with $\eta=1$, while Jaccard Index's conflict awareness is mostly below 0.2.

Second, we examine the effect of $\eta$.  In principle, a larger $\eta$ means it is harder to identify positive links in the test set.  For both networks, we can see that the conflict awareness of most algorithms drop as $\eta$ goes up, though the four GNN-based methods seem to be slightly less affected. Similar trends can be observed from the recall plots (b) and (d), where GNN-based algorithms stay quite robust as the task gets increasingly harder. These observations suggests that the ability to suggest ``relevant'' links can be crucial for maintaining good conflict awareness, especially when the recommendation task is hard.

Third, we have interesting observations on the ``Conflict Minimization'' algorithm: on both networks, this algorithm consistently produces the worst recall among all algorithms --- barely usable from the perspective of relevance. However, although there are not many relevant links in its recommendation, the algorithm still has the best conflict awareness most of the time.  among the algorithm's recommended links, the few ones that are actually relevant (positive) are highly effective to reduce conflict. In that sense, we can still clearly see the there exists a certain amount of misalignment between ``relevance'' and ``conflict reduction''. 

\section{Limitation: the Paradox of Conflict and Happiness}

We found a limitation in the long line of existing works analyzing opinion conflict based on the FJ model, which our work also inherits from. The conflict measure we've discussed so far may \textit{not} reflect user's happiness in their social engagement. Originally proposed in the seminal work \cite{bindel2015bad}, the \textit{happiness} measure (or equivalently the \textit{unhappiness}) quantifies the amount of mental pressure felt by people in their social engagement, as the sum of two terms: the amount of disagreement with friends (\textit{i.e.} our disagreement term), and the amount of opinion shift between initial opinions and expressed opinions, \textit{i.e.} internal conflict, defined as 
\begin{align*}
    \mathcal{I}(G, s)=\sum_{i\in V} (z_i-s_i)^2=s^T((I+L)^{-1}-I)^2s
\end{align*}

\vspace{-1mm}
\begin{wrapfigure}{r}{0.48\textwidth}
    \centering
    \includegraphics[scale=0.3]{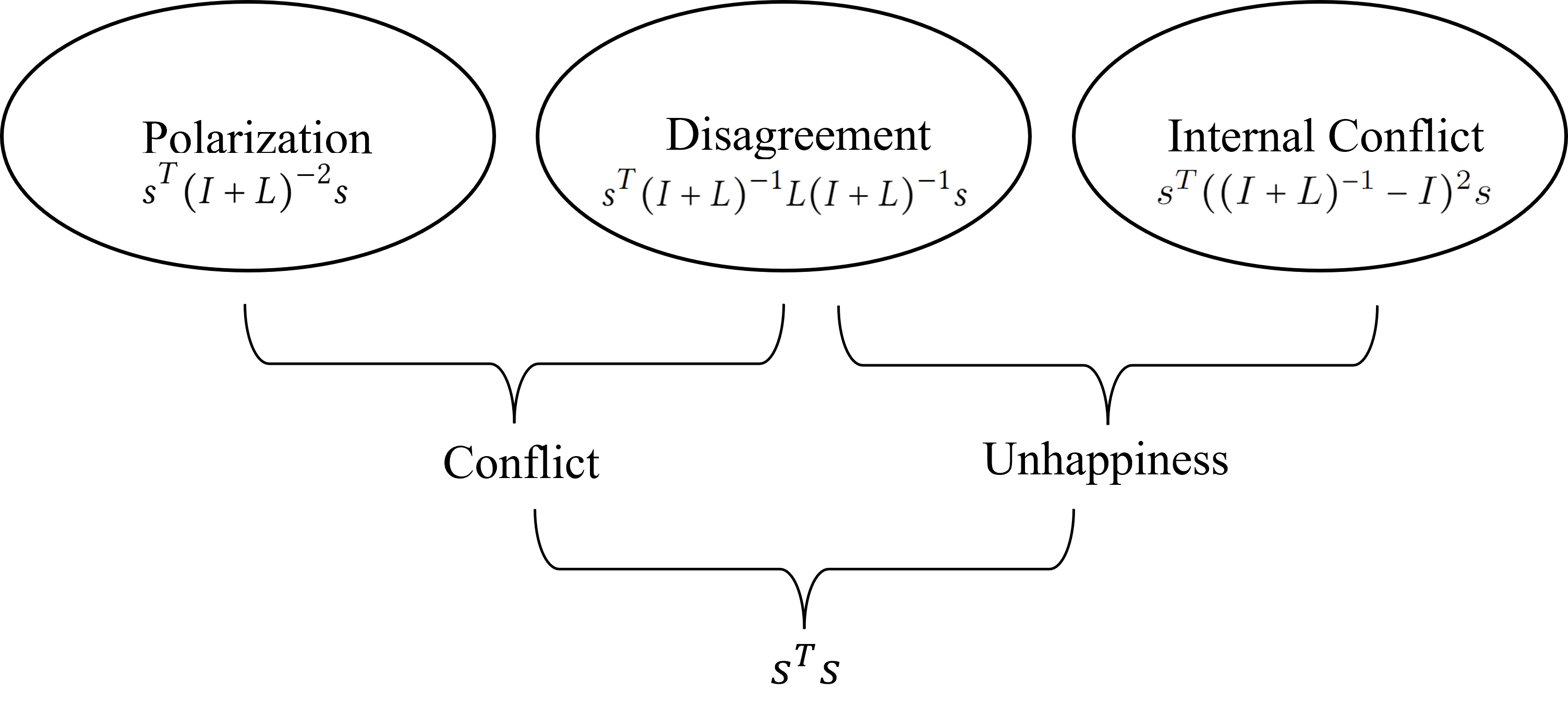}\vspace{-5mm}
    \caption{Relationship of the important concepts in the Friedkin-Johnsen opinion model.}
    \label{fig:fj}
\end{wrapfigure}
\vspace{-3mm}
We use $\mathcal{U}(G, s)$ to denote the unhappiness of people over a social network $G$ and initial opinions $s$. Fig. \ref{fig:fj} illustrates the relationship of important concepts in the FJ model we've discussed so far.

\vspace{-1mm}
Fig.\ref{fig:fj} shows that the common term, disagreement, is shared by both measures of polarization and disagreement. Therefore, it seems likely that when one measure changes (say conflict drops due to the addition of a new link), the other measure should change in the same direction. Surprisingly however, this is not true. In fact, the following theorem shows the two measures to \textbf{always} change in opposite directions when network structure changes!

\begin{theorem}\label{thm:discomfort}
    Given initial opinions $s$ and social network $G=(V, E)$, we have the following conservation law about conflict and unhappiness (notice the RHS is independent of the graph structure):\vspace{-1.5mm}
    \begin{align}
        \mathcal{C}(G, s) + \mathcal{U}(G, s) = s^Ts
    \end{align}
\end{theorem}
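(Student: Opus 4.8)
The plan is to reduce both quantities to quadratic forms in the single symmetric matrix $(I+L)^{-1}$ and show that the governing matrices sum to the identity. First I would unpack the two sides using the definitions in the excerpt. By definition the unhappiness is the sum of the disagreement term and the internal-conflict term, $\mathcal{U}(G,s) = \mathcal{D}(G,s) + \mathcal{I}(G,s)$, and substituting the closed forms for $\mathcal{C}$, $\mathcal{D}$, and $\mathcal{I}$ gives
\[
\mathcal{C}(G,s) + \mathcal{U}(G,s) = s^T(I+L)^{-1}s + s^T(I+L)^{-1}L(I+L)^{-1}s + s^T\big((I+L)^{-1}-I\big)^2 s.
\]
The goal then becomes showing that the three matrices on the right add up to $I$.

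The key algebraic step I would use is the resolvent identity $L(I+L)^{-1} = I - (I+L)^{-1}$, which follows at once from writing $L = (I+L) - I$. Multiplying on the left by $(I+L)^{-1}$ collapses the disagreement matrix into a difference of resolvent powers, $(I+L)^{-1}L(I+L)^{-1} = (I+L)^{-1} - (I+L)^{-2}$. Expanding the internal-conflict matrix gives $\big((I+L)^{-1}-I\big)^2 = (I+L)^{-2} - 2(I+L)^{-1} + I$. Adding these two contributions, the $(I+L)^{-2}$ terms cancel and I am left with $\mathcal{U}(G,s) = s^T\big(I - (I+L)^{-1}\big)s$. Finally, adding $\mathcal{C}(G,s) = s^T(I+L)^{-1}s$ cancels the remaining resolvent term and leaves $\mathcal{C}(G,s) + \mathcal{U}(G,s) = s^T s$, manifestly independent of the graph, which is the claimed conservation law.

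There is no genuine obstacle beyond bookkeeping; the entire content sits in recognizing the substitution $L(I+L)^{-1} = I - (I+L)^{-1}$, which makes the $(I+L)^{-2}$ pieces of the disagreement and internal-conflict terms cancel exactly. The one point I would verify carefully is that all matrices in play are polynomials (or power-series limits) in $L$, hence symmetric and simultaneously diagonalizable, so that the powers of $(I+L)^{-1}$ commute and the quadratic forms combine termwise as written; this justifies treating $\big((I+L)^{-1}-I\big)^2$ as $(I+L)^{-2} - 2(I+L)^{-1} + I$ inside the quadratic form. With that justification in place the computation is a two-line cancellation.
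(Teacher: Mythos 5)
Your proof is correct and follows essentially the same route the paper intends: express $\mathcal{C}$, $\mathcal{D}$, and $\mathcal{I}$ as quadratic forms in $(I+L)^{-1}$ and cancel, with the entire content residing in the identity $L(I+L)^{-1}=I-(I+L)^{-1}$. In fact your version repairs a slip in the paper's own proof: the paper writes $\mathcal{U}=\mathcal{P}+\mathcal{I}=s^T(I+L)^{-2}s+s^T(I-(I+L)^{-1})^2s$ and asserts this equals $s^T(I-(I+L)^{-1})s$, which is false as written (those two terms sum to $s^T\big(2(I+L)^{-2}-2(I+L)^{-1}+I\big)s$); the correct decomposition is the one you use, $\mathcal{U}=\mathcal{D}+\mathcal{I}$, consistent with the paper's textual definition of unhappiness, under which the $(I+L)^{-2}$ terms cancel exactly as you show. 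Your closing concern about commutativity is harmless but unnecessary: expanding $\big((I+L)^{-1}-I\big)^2$ only requires that $I$ commutes with everything, and all matrices in play are functions of $L$ anyway.
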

Theorem \ref{thm:discomfort} reveals a paradox: reducing conflict by modifying the structure of a social network always comes at the expense of more unhappiness of people. To resolve this paradox goes beyond this paper's scope, but would be a very interesting topic to study in the future.


\section{Conclusion}\vspace{-2mm}
In this work, we analyzed the relationship between relevance and opinion conflict in online social link recommendations. We present multiple pieces of evidence challenging the view that the two objectives are totally incompatible in link recommendations. For future work, it would be extremely interesting to study recommendation algorithms that can combine the two features. We also conjecture that rigorous bounds can be derived for the conflict awareness of some classical link recommendation methods such as Peronalized PageRank and Katz Index.  

\newpage
\section*{Acknowledgement}
The authors thank Eva Tardos and Sigal Oren for their helpful feedback on this work. This work is supported in part by a Simons Investigator Award, a Vannevar Bush Faculty Fellowship, MURI grant W911NF-19-0217, AFOSR grant FA9550-19-1-0183, ARO grant W911NF19-1-0057, a Simons Collaboration grant, and a grant from the MacArthur Foundation. 
\bibliographystyle{IEEETran}
\bibliography{references}

\newpage

\appendix
\begin{center}
{\Large \textbf{Appendix}}
\end{center}

\section{Reproducibility}
Our code and data can be downloaded from \url{https://github.com/Abel0828/NeurIPS23-Conflict-Relevance-in-FJ}.

To mitigate this risk, we suggest social platforms take more cautious steps when deciding to reduce the exposure of one person's content feed to another, such as additional algorithmic check in background, as well as more security measures to guard against the hacking of platform's administrative authority. Researchers are also encouraged to study network structures that are more robust to attacks of such kind, as well as defense measures to be taken when such attacks actually happen.

\section{Proofs}
\subsection{Proof of Theorem \ref{thm:cr}}
\begin{proof}
    Let $L_{+e}$ denote the Laplacian matrix of the new social network after adding a new link $e=(i,j)$. To prove Eq.(\ref{eq:cr_abs}), we invoke the Sherman-Morrison Formula \cite{bartlett1951inverse} for computing the inverse of rank-1 update to an invertible matrix. Notice that $G_{+e}=L+b_eb_e^T$. Therefore,  
    \begin{align*}
        \mathcal{C}(G_{+e}, s) - \mathcal{C}(G, s)&=s^T((I+G_{+e})^{-1}-(I+L)^{-1})s\\
        &=s^T((I+L+b_eb_e^T)^{-1}-(I+L)^{-1})s\\
        &=-s^T\frac{(I+L)^{-1}b_eb_e^T(I+L)^{-1}}{1+b_e^T(I+L)^{-1}b_e}s\\
        &=-\frac{|b_e^T(I+L)^{-1}s|_2^2}{1+ b_e^T(I+L)^{-1}b_e}\\
        &= -\frac{(z_i-z_j)^2}{1+b_e^T(I+L)^{-1}b_e}.
    \end{align*}
    $L$ is positive semidefinite, so $(I+L)^{-1}$ is also positive semidefinite. Therefore, $1+b_e^T(I+L)^{-1}b_e$ is positive, and so $ -\frac{(z_i-z_j)^2}{1+b_e^T(I+L)^{-1}b_e}\le 0$.\\
    To prove Eq.(\ref{eq:cr_exp}), we further note that
    \begin{align*}
        \mathbb{E}_s[\mathcal{C}(G_{+e}, s) - \mathcal{C}(G, s)]&=\mathbb{E}_s[-s^T\frac{(I+L)^{-1}b_eb_e^T(I+L)^{-1}}{1+b_e^T(I+L)^{-1}b_e}s] \\
        &= \mathbb{E}_s[-\frac{b_e^T(I+L)^{-1}ss^T(I+L)^{-1}b_e}{1+b_e^T(I+L)^{-1}b_e}]\\
        &= -\frac{b_e^T(I+L)^{-1}\mathbb{E}_s[ss^T](I+L)^{-1}b_e}{1+b_e^T(I+L)^{-1}b_e} \\
        &=-\frac{b_e^T(I+L)^{-1}(\sigma^2I)(I+L)^{-1}b_e}{1+b_e^T(I+L)^{-1}b_e}\\
        &=-\frac{\sigma^2|(I+L)^{-1}b_e|_2^2}{1+ b_e^T(I+L)^{-1}b_e}\le 0.
    \end{align*}
\end{proof}

\subsection{Proof of Theorem \ref{thm:distance1}}
\begin{proof}
    Let $M=I+L$, and let matrix $C$ be the co-factor matrix of $M$, then $(I+L)^{-1} = M^{-1}=|M|^{-1}C$. Therefore, $b_e^T(I+L)^{-1}b_e = |M|^{-1}(C_{ii}+C_{jj}-C_{ij}-C_{ji})$. $|M|$ is the determinant of matrix $M$. \cite{chebotarev2006matrix} presents a result that $|M|$ equals the total number of spanning rooted forests of $G$, and $C_{xy}$ equals the total number of spanning rooted forests of $G$, in which node $x$ and $y$ belong to the same tree rooted at $x$. The theorem is proved by substituting this previous result back into $|M|^{-1}(C_{ii}+C_{jj}-C_{ij}-C_{ji})$.
\end{proof}

\subsection{Proof of Theorem \ref{thm:distance2}}
\begin{proof}
    \begin{align*}
        \sigma^2|(I+L)^{-1}b_e|_2^2 &= \sigma^2\sum_{k\in V} (|M|^{-1}C_{ik}-|M|^{-1}C_{jk})^2 \\
        &=  \sigma^2|M|^{-2}\sum_{k\in V} (C_{ik}-C_{jk})^2
    \end{align*}
    Since $M$ is symmetric, we have $C_{ik}+N_{ik}=C_{ki}+\mathcal{N}_{ki}=C_{kk}$, $C_{jk}+\mathcal{N}_{jk}=C_{kj}+\mathcal{N}_{kj}=C_{kk}$, where $C_{kk}$ according to \cite{chebotarev2006matrix} is equal to the total number of spanning rooted forests where node $k$ is at the root of the tree to which $k$ belongs. Joining the two equations, we have $C_{ik}-C_{jk}=\mathcal{N}_{ik}-\mathcal{N}_{jk}$. Therefore,
    \begin{align*}
        \sigma^2|(I+L)^{-1}b_e|_2^2 =\sigma^2\mathcal{N}^{-2}\sum_{k \in V}(\mathcal{N}_{ik}-\mathcal{N}_{jk})^2
    \end{align*}
\end{proof}

\subsection{Proof of Corollary \ref{crl:distance2}}
\begin{proof}
    The correctness quickly follows from substituting Equations (\ref{eq:distance1}, \ref{eq:distance2}) into Equation (\ref{eq:cr_abs}).
\end{proof}

\subsection{Proof of Proposition \ref{prop:ca1}}
\begin{proof}
To show that the objective is convex, we resort to the result in \cite{watkins1974convex}, Example 9: $X^{-1}$ is a matrix convex on the set of all nonnegative invertible Hermitian matrices. Obviously $I+L+L_f$ is nonnegative, invertible and symmetric, so it is a matrix convex. Therefore, the objective is convex. Any convex combination of Laplacians is still a Lapalacian. The trace of any convex combination of of matrices cannot exceed the trace of any members. Therefore, the feasible region is also convex. 
\end{proof}

\subsection{Expected Conflict Awareness}\label{subsec:eca}
\begin{definition}\label{def:ca2}
    Given a social network $G$ and a budget $\beta>0$, the \textbf{conflict awareness over Expectation (CAE)} of a link addition function $f(e; G, \beta)$ is likewise defined as: 
    \begin{align}
        \textup{\textbf{CAE}}(f) \equiv \frac{\Delta_{f}\mathbb{E}_s[\mathcal{C}]}{\Delta_{f^*}\mathbb{E}_s[\mathcal{C}]}
    \end{align}
    where 
    \begin{align}
        &&\;\;\;\Delta_{f}\mathbb{E}_s[\mathcal{C}]\;\equiv\;  \sigma^2[\textup{Tr}((I&+L+L_f)^{-1})-\textup{Tr}((I+L)^{-1})]\;\;\;\;\;\;\;\;\label{eq:ca2_1}\\
        && \Delta_{f^*}\mathbb{E}_s[\mathcal{C}]\;\equiv\; \min_{L_f} &\;\;\;\; \Delta_{f}\mathbb{E}_s[\mathcal{C}]\;\;\;\;\;\;\;\;\label{eq:ca2_2}\\
    &&\textup{subject to}  &\;\;\;\;  L_f \in \mathcal{L}\text{ (Laplacian constraint)}\label{eq:ca2_3}\;\;\;\;\;\;\;\;\\
    && &\;\;\;\; \textup{Tr}(L_f) \leq 2\beta\text{ (budget constraint)} \;\;\;\;\;\;\;\;\label{eq:ca2_4}
    \end{align}
\end{definition}

\begin{proposition}
The definition of $\Delta_{f}\mathbb{E}_s[\mathcal{C}]$ above is consistent with that of $\Delta_{f}\mathcal{C}$ in Definition \ref{def:ca} in the sense that they satisfy $\Delta_{f}\mathbb{E}_s[\mathcal{C}] \equiv \int_s\rho(s)\;\Delta_f\mathcal{C}\;d_s$.
\end{proposition}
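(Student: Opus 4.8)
The plan is to evaluate $\int_s \rho(s)\,\Delta_f\mathcal{C}\,ds$ directly by recognizing it as the expectation $\mathbb{E}_s[\Delta_f\mathcal{C}]$ and then collapsing the resulting expectation of a quadratic form into a trace, exactly as in the proof of Theorem~\ref{thm:cr}. First I would substitute the closed form of $\Delta_f\mathcal{C}$ from Eq.~(\ref{eq:ca1_1}) and use linearity of expectation to obtain
\begin{align*}
\int_s \rho(s)\,\Delta_f\mathcal{C}\,ds = \mathbb{E}_s\!\left[s^T(I+L+L_f)^{-1}s\right] - \mathbb{E}_s\!\left[s^T(I+L)^{-1}s\right],
\end{align*}
so the problem reduces to evaluating two terms of the form $\mathbb{E}_s[s^TMs]$ for fixed symmetric $M$.

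The key step is the standard identity $\mathbb{E}_s[s^TMs] = \textup{Tr}\!\left(M\,\mathbb{E}_s[ss^T]\right)$. Because the coordinates $s_i$ are drawn iid from $\mathcal{D}(0,\sigma^2)$, we have $\mathbb{E}_s[ss^T] = \sigma^2 I$: the diagonal entries are the common variance $\sigma^2$, while the off-diagonal entries are expectations of products of independent, mean-zero variables and hence vanish. This is precisely the computation already used to pass from the first line to the last line of the proof of Eq.~(\ref{eq:cr_exp}). Substituting $M=(I+L+L_f)^{-1}$ and $M=(I+L)^{-1}$ in turn yields
\begin{align*}
\mathbb{E}_s[\Delta_f\mathcal{C}] = \sigma^2\,\textup{Tr}\!\left((I+L+L_f)^{-1}\right) - \sigma^2\,\textup{Tr}\!\left((I+L)^{-1}\right),
\end{align*}
which is exactly the definition of $\Delta_f\mathbb{E}_s[\mathcal{C}]$ in Eq.~(\ref{eq:ca2_1}), establishing the claimed identity.

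There is no serious obstacle here; the argument is a one-line application of the same trace trick already deployed in Theorem~\ref{thm:cr}. The only points deserving a sentence of care are the interchange of expectation and trace, which is legitimate since the trace is a finite linear combination of the entries of $Mss^T$ and expectation commutes with finite linear combinations, and the verification that $\mathbb{E}_s[ss^T]=\sigma^2I$ relies only on independence, zero mean, and common variance of the coordinates — precisely the distributional assumptions under which $\Delta_f\mathbb{E}_s[\mathcal{C}]$ was defined in Definition~\ref{def:ca2}.
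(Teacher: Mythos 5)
Your proof is correct and is essentially identical to the paper's own argument: both reduce the claim to the identity $\int_s \rho(s)\, s^T A s \, d_s = \mathrm{Tr}\bigl(A\,\mathbb{E}_s[ss^T]\bigr) = \sigma^2\,\mathrm{Tr}(A)$ via the trace trick and the fact that iid zero-mean coordinates give $\mathbb{E}_s[ss^T]=\sigma^2 I$, then substitute $A=(I+L+L_f)^{-1}$ and $A=(I+L)^{-1}$. No gaps; nothing further is needed.
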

\begin{proof}
Let $A$ be any square matrix of the same shape as $L$. Then $\int_s\rho(s)s^TAs\,d_s =\\\int_s\rho(s)s^T(As)\,d_s=\int_s\rho(s)\textup{Tr}((As)s^T)\,d_s=\int_s\rho(s)\textup{Tr}(A(ss^T))\,d_s=\textup{Tr}(A\int_s\rho(s)(ss^T)\,d_s) =\textup{Tr}(A(\sigma^2I)) =\sigma^2\textup{Tr}(A)$. By substituting $A=(I+L+L_f)^{-1}$ and $A=(I+L)^{-1}$ into Eq. (\ref{eq:ca1_1}) respectively, the proposition is proved.

\end{proof}

\begin{proposition}
     In Definition \ref{def:ca2}, $\Delta_{f^*}\mathbb{E}_s[\mathcal{C}]$ is also the objective of a convex optimization problem.
\end{proposition}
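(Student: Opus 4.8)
The plan is to mirror closely the argument already used for Proposition \ref{prop:ca1}, since the only structural difference here is that the scalar quadratic form $s^T(\cdot)^{-1}s$ has been replaced by the trace $\mathrm{Tr}((\cdot)^{-1})$, while the feasible region is identical. First I would observe that in the objective of Definition \ref{def:ca2}, the term $\sigma^2\mathrm{Tr}((I+L)^{-1})$ does not depend on the decision variable $L_f$, so it is an additive constant and may be dropped for the purpose of checking convexity. Since $\sigma^2 > 0$ is a positive scalar, minimizing $\Delta_{f}\mathbb{E}_s[\mathcal{C}]$ over $L_f$ is equivalent to minimizing $\mathrm{Tr}((I+L+L_f)^{-1})$, so it suffices to prove that the map $L_f \mapsto \mathrm{Tr}((I+L+L_f)^{-1})$ is convex.

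The key step will be to lift the matrix convexity of inversion to a convexity statement for the trace. As invoked in the proof of Proposition \ref{prop:ca1}, the result of \cite{watkins1974convex} gives that $X \mapsto X^{-1}$ is matrix convex on the positive-definite cone: for positive-definite $X, Y$ and $\theta \in [0,1]$, one has $(\theta X + (1-\theta)Y)^{-1} \preceq \theta X^{-1} + (1-\theta)Y^{-1}$ in the L\"owner order. Because the trace is monotone with respect to the L\"owner order (if $A \preceq B$ then $\mathrm{Tr}(A) \le \mathrm{Tr}(B)$), applying the trace to both sides of this operator inequality yields $\mathrm{Tr}((\theta X + (1-\theta)Y)^{-1}) \le \theta\,\mathrm{Tr}(X^{-1}) + (1-\theta)\,\mathrm{Tr}(Y^{-1})$, which is exactly scalar convexity of $X \mapsto \mathrm{Tr}(X^{-1})$. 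Equivalently, I could write $\mathrm{Tr}(X^{-1}) = \sum_k e_k^T X^{-1} e_k$ as a finite sum of the quadratic forms already shown convex in the proof of Proposition \ref{prop:ca1}, and use that a sum of convex functions is convex. Composing with the affine map $L_f \mapsto I + L + L_f$, and noting that $I+L+L_f \succeq I \succ 0$ is symmetric positive definite throughout the feasible region (both $L$ and $L_f$ are Laplacians, hence positive semidefinite), then gives convexity of the objective.

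For the constraints I would reuse verbatim the argument from Proposition \ref{prop:ca1}: the set of graph Laplacians $\mathcal{L}$ is closed under convex combinations, and the budget constraint $\mathrm{Tr}(L_f) \le 2\beta$ is a single affine inequality, so the feasible region is convex. A convex objective minimized over a convex feasible set is a convex optimization problem, which completes the proof. I do not anticipate a genuine obstacle here; the one point that warrants care is the passage from matrix convexity to scalar convexity of the trace, and I would make that explicit via the L\"owner-order monotonicity of the trace (or the basis-sum decomposition) rather than treating it as self-evident.
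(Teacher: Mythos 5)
Your proof is correct, but it takes a genuinely different route from the paper's. The paper does not work with the trace expression at all: it invokes the identity $\Delta_{f}\mathbb{E}_s[\mathcal{C}] \equiv \int_s\rho(s)\,\Delta_f\mathcal{C}\,d_s$ (established in the consistency proposition of Appendix \ref{subsec:eca}) and argues that, since each $\Delta_f\mathcal{C}$ is convex in $L_f$ for fixed $s$ by the proof of Proposition \ref{prop:ca1}, and $\rho(s)\geq 0$, the integral is a nonnegative mixture of convex functions and hence convex; the feasible region is handled exactly as before. Your argument instead proves convexity of $L_f \mapsto \mathrm{Tr}\bigl((I+L+L_f)^{-1}\bigr)$ deterministically, by lifting the Watkins matrix-convexity of $X\mapsto X^{-1}$ to the trace via monotonicity of $\mathrm{Tr}$ under the L\"owner order (i.e.\ $\mathrm{Tr}(A)\le\mathrm{Tr}(B)$ whenever $B-A$ is positive semidefinite), then composing with the affine map $L_f\mapsto I+L+L_f$. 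Note that your fallback decomposition $\mathrm{Tr}(X^{-1})=\sum_k e_k^T X^{-1} e_k$ is precisely the discrete analogue of the paper's argument: the paper averages the convex quadratic forms $s^T X^{-1} s$ against the opinion density, while you sum them over an orthonormal basis. What each approach buys: the paper's proof is a one-liner once the integral identity is in hand, but it silently depends on that identity (and hence on the i.i.d.\ finite-variance assumption on $s$) and on the fact that integration against a nonnegative density preserves convexity; your proof is self-contained, works directly from the closed-form definition in Eq.~(\ref{eq:ca2_1}) without any probabilistic input, and makes explicit the one step the paper never states --- the passage from operator convexity to scalar convexity of the trace.
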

\begin{proof}
From the proof for Proposition \ref{prop:ca1}, it suffices to only show that the $\Delta_{f}\mathbb{E}_s[\mathcal{C}]$ in Equation \ref{eq:ca2_1} is convex in $L_f$ given other variables fixed. Notice that we mentioned $\Delta_{f}\mathbb{E}_s[\mathcal{C}] \equiv \int_s\rho(s)\;\Delta_f\mathcal{C}\;d_s$, in which $\rho(s)\geq 0$, $\Delta_f\mathcal{C}$ can be viewed as a function of $L_f$ and $s$, and is convex in $L_f$ given $s$ to be further fixed. Therefore, the integral $\Delta_{f}\mathbb{E}_s[\mathcal{C}]$ is also convex in $L_f$.
\end{proof}

\subsection{Proof of Theorem \ref{thm:contract}}
\begin{proof}
    Let $0=\lambda_1\leq\lambda_2\leq...\leq\lambda_n$ be eigenvalues of $L$ in ascending order; the eigen decomposition of $L=U\Lambda U^T$ where $\Lambda=\text{diag}([\lambda_1,...\lambda_n])$ and $U$ is the corresponding orthornormal matrix satisfying $UU^T=I$. Notice that $(I+L)^{-1} = U(I+\Lambda)^{-1} U^T$.
    \begin{align*}
        \frac{\mathcal{C}(G_0,s)}{\mathcal{C}(G, s)} = \frac{s^Ts}{s^T(I+L)^{-1}s} = \frac{s^TUU^Ts}{s^TU(I+\Lambda)^{-1}U^Ts}
    \end{align*}
    let $s'=U^Ts$, and further notice that since we have assumed $s$ to be zero-centered (see Section\ref{sec:prelim}), $s_1'=1^Ts=0$. We can further rewrite: 
    \begin{align*}
        \frac{\mathcal{C}(G_0,s)}{\mathcal{C}(G, s)} = \frac{s'^Ts'}{s'^T(I+\Lambda)^{-1}s'}=\frac{\sum_{i=1}^{n}{s_i'}^2}{\sum_{i=1}^{n}{(1+\lambda_i)^{-1}}{s_i'}^2}=\frac{\sum_{i=2}^{n}{s_i'}^2}{\sum_{i=2}^{n}{(1+\lambda_i)^{-1}}{s_i'}^2}
    \end{align*}
    It is not hard to see that 
    \begin{align*}
        1+\lambda_n\geq \frac{\mathcal{C}(G_0,s)}{\mathcal{C}(G, s)} \geq 1+\lambda_2
    \end{align*}
    For the upper bound, \cite{anderson1985eigenvalues} shows that $\lambda_n\leq\max_{(i,j)\in E}(d_i+d_j)$; for the lower bound, we know from Lemma A.1 of \cite{racz2022towards} that $\lambda_2\geq \frac{1}{2}d_{\min}h^2_G$, where $d_{min}$ is the minimum node degree in $G$; $h_G$ is the Cheeger constant of $G$. Substituting these back into expression above, we have 
    $1+\max_{(i,j)\in E}(d_i+d_j)\geq \frac{\mathcal{C}(G_0,s)}{\mathcal{C}(G, s)} \geq 1+\frac{1}{2}d_{\min}h^2_G \geq 1$.
\end{proof}

\subsection{Proof of Theorem \ref{thm:discomfort}}
\begin{proof}
    Notice that $\mathcal{C} = s^T(I+L)^{-1}s$, and $\mathcal{U} =\mathcal{P}+\mathcal{I} = s^T(I+L)^{-2}s + s^T(I-(I+L)^{-1})^2s=s^T(I-(I+L)^{-1})s$. Therefore, $\mathcal{C}+\mathcal{U}=s^Ts$ which is a constant.
\end{proof}

\section{Experiments}\label{sec:exp_more}
\subsection{Verifying the Direction of Conflict Change (Theorem \ref{thm:cr})}

We computationally verify that opinion conflict always gets reduced when a new link is added to the network. We use six datasets, including three synthetic networks and three real-world social networks. The synthetic networks are, a Erdős–Rényi Graph ($n=100, p=0.5$), a path graph ($n=100$), a 10 by 10 2D-grid graph. The real-world networks are, the Karate club social network, Reddit, and Twitter (as introduced in Sec.\ref{subsec:exp_data}). For each network, we compute the amount of conflict change caused by adding a link between every pair of disconnected node in the graph, with each link replaced one at a time. Figure \ref{fig:exp_cr} shows the distributions of the amounts of conflict conflict for all the six datasets. We can see that they are all on the negative side of the axis. This result validates the negative sign in Theorem \ref{thm:cr} and demonstrates its broad applicability.
\begin{figure}[h]
    \centering
    \includegraphics[scale=0.5]{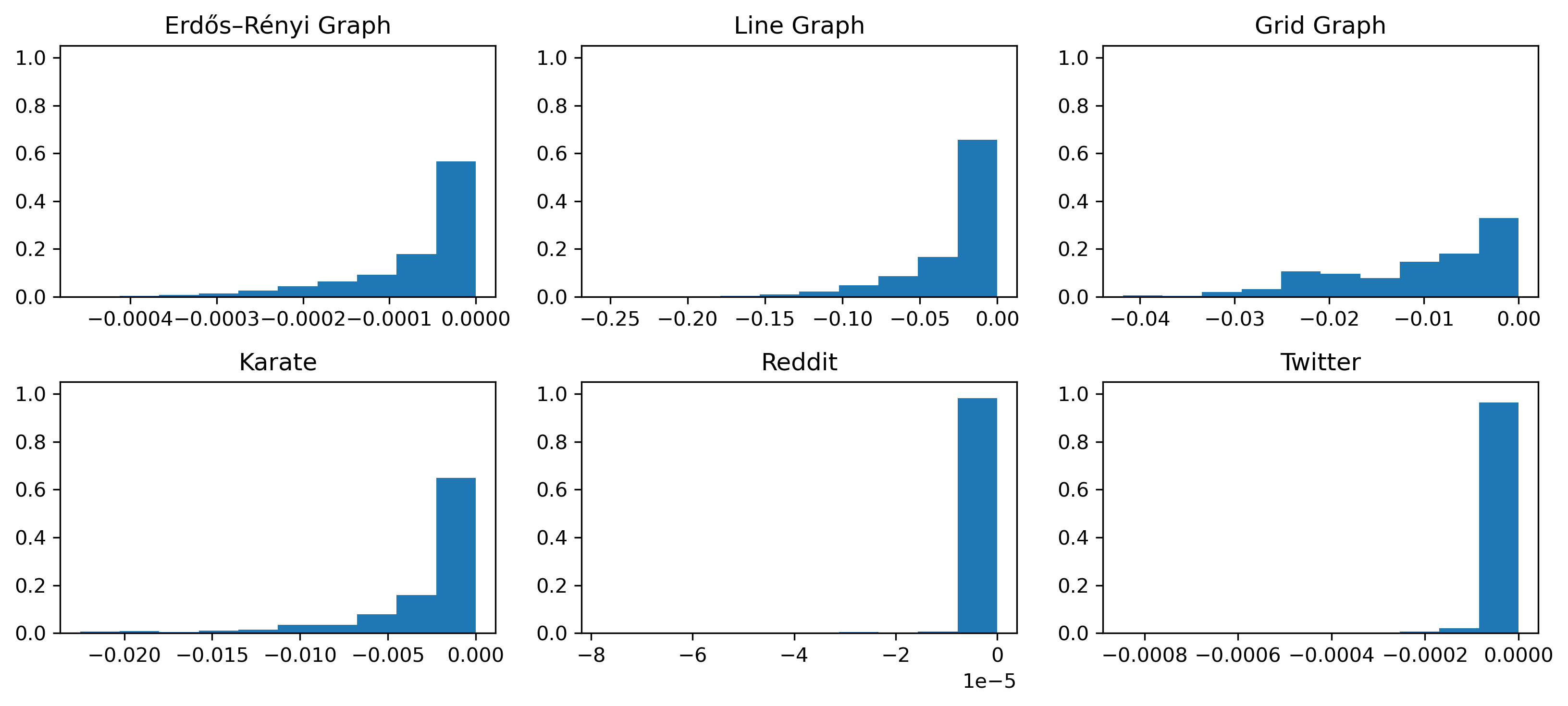}
    \caption{Computational validation for Theorem \ref{thm:cr}.}
    \label{fig:exp_cr}
\end{figure}

\subsection{Verifying Conflict Contraction (Theorem \ref{thm:contract})}
We start with an empty graph with $N$ nodes. In each iteration, one edge is randomly added between two disconnected nodes; we then compute the the lower bound, the upper bound, and the conflict contraction rate as given in Theorem \ref{thm:contract}. The iterations stop when no pair of nodes are left disconnected (\textit{i.e.} the graph is complete). We choose $N=20$ in this experiment as computing the Cheeger constant term is NP-hard.

Figure \ref{fig:exp_cc} plots the lower bounds, the lower bounds, the upper bounds, and the conflict contraction rates, with respect to the increasing numbers of edges in the graph. We can see that the conflict contraction rates are indeed lying in between the two bounds. The gap exists because we cannot exhaust all the possible graphs on 20 nodes. Nevertheless, this experiment provides a good piece of evidence that Theorem \ref{thm:contract} is correct.

\begin{figure}[h]
    \centering
    \includegraphics[scale=0.6]{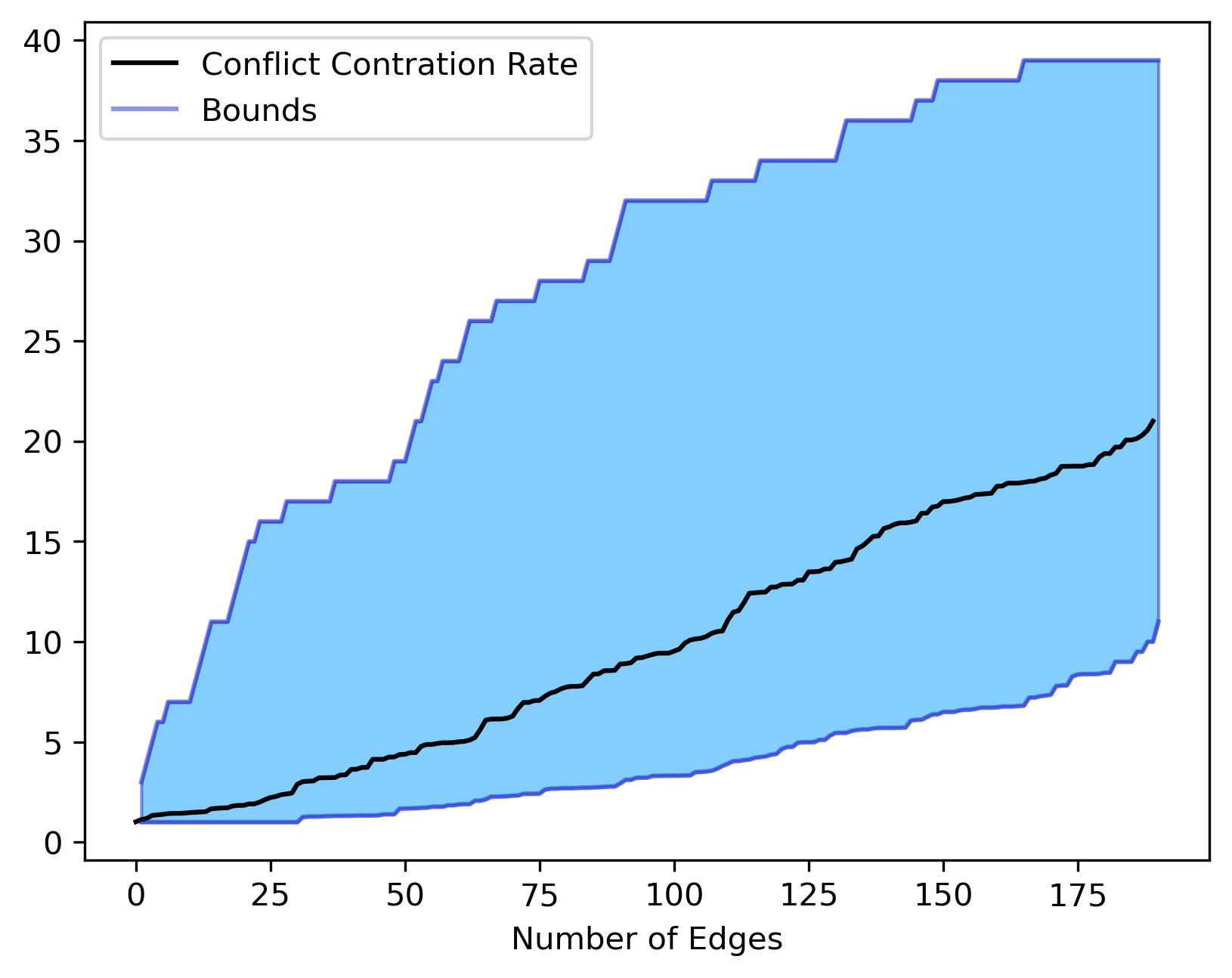}
    \caption{Computational validation for Theorem \ref{thm:contract}.}
    \label{fig:exp_cc}
\end{figure}

\subsection{Dataset and Preprocessing}\label{subsec:exp_data}
\textbf{Twitter.} The dataset is extracted from a number of tweets relevant to the Delhi Assembly elections 2013. In the preprocessing, only the largest strongest-connected component (SCC) gets retained, which contains 548 users and 3638 undirected edges; each edge represents a pair of follower and followee. The initial opinions ($s$) were mapped by a sentiment analysis tool designed for Twitter~\cite{hannak2012tweetin}, based on each user's first-hour tweets in the record window.

\textbf{Reddit.} The dataset is extracted from the subreddit of  ``Politics'' between 07/2013 and 12/2013. Similar to Twitter, only the largest SCC is retained, containing 556 users and 8969 edges. An edge exists between two users if both of them posted in the same subreddit other than ``Politics'' during the aforementioned time period. The initial opinions were mapped using the standard linguistic analytics tool LIWC \cite{pennebaker2001linguistic}.

\subsection{Configurations}\label{subsec:moreconfig}
 We run all experiments on Intel Xeon Gold 6254 CPU@3.15GHz with 1.6TB Memory. 7 out of the 13 methods being evaluated for the conflict awareness metric involve hyperparameter settings. They can also be found in our code: we use alpha=0.5 for Katz; alpha=0.85 for Personalized PageRank; for node2vec, we use 64 as node embedding dimensions, $\text{context window}=10, p=2, q=0.5$; for GCN and R-GCN, we use two layers with 32 as hidden dimensions, followed by a 2-layer MLP with 16 hidden units; for SuperGAT, we use two layers with 64 as hidden dimensions; for Graph Transformer, we use two layers with 32 as hidden dimensions, and attention head number of 4.

\subsection{Linear Scaling of the Output}\label{subsec:exp_scale}
To make sure that the weights of all recommended links sum up to $\beta$, we linearly scale each link recommendation algorithm's output by a normalizing constant: Notice that each link recommendation algorithms is essentially a scoring function on the links. For a model $h$, its output weight $w_h(e)$ of each recommended link $e$ follows the normalized form $w_{h}(e)=\beta \frac{s_{h}(e)}{\sum_e s_{h}(e)}$, where $s_{h}(e)$ is the original score that model $h$ assigns to link $e$.

\subsection{Precision@10}

\begin{figure*}[H]
    \centering
    \includegraphics[scale=0.34]{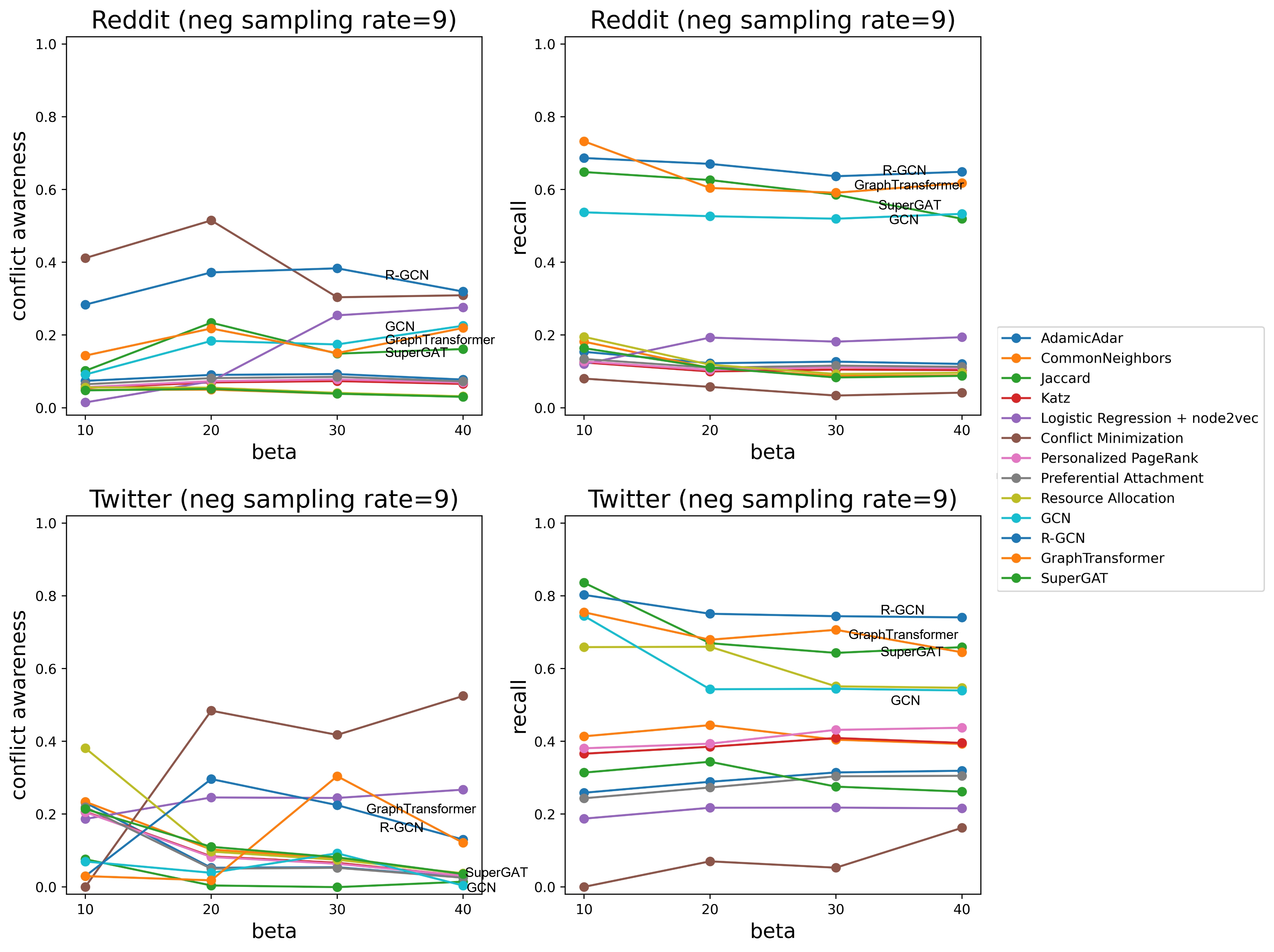}
    \caption{\small conflict awareness (left) and recall (right) of 13 link recommendation algorithms on a sample of Reddit (upper) and Twitter (lower) social networks. The x-axis is the hyperparameter $\beta$ controlling $\#\text{positive links}$ and $\#\text{recommended links}$. $\eta=\frac{\#\text{negative links}}{\#\text{positive links}}$ is fixed at $9$. This figure exists to rule out the chance that the total size of the test set (instead of the negative sampling ratio) affects the result.}
    \label{fig:reddit_twitter_beta}
\end{figure*}

\begin{figure}[H]
    \centering
    \includegraphics[scale=0.35]{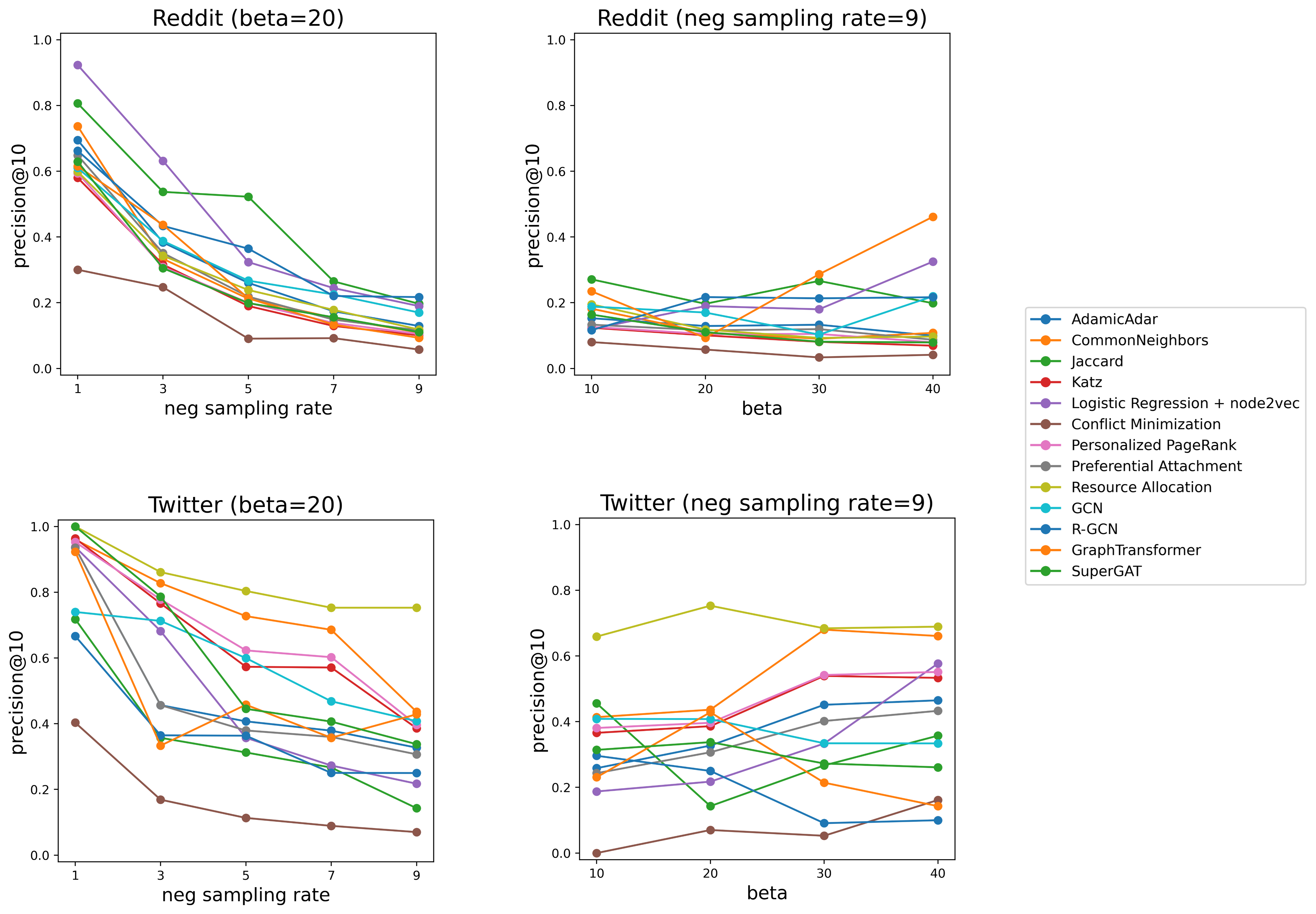}
    \caption{Precision@10 of 13 link recommendation algorithms on samples of Reddit (upper) and Twitter (lower) social network. These plots supplement the recall measurement in Fig. \ref{fig:reddit_twitter_neg} as another proxy for ``relevance''.}
    \label{fig:reddit_twitter_pres}
\end{figure}

\section{Broader Impacts}\label{sec:impacts}
Our analysis reveals the types of social links that, when added to the social network, can most effectively reduce polarization and disagreement. While this result itself is for a good cause,  a potential risk exists when one interprets it into the opposite direction: we now know certain types of links that, when removed from the social network, can most effectively \textbf{increase} polarization and disagreement. This could be abused by an (authoritative) adversarial to increase polarization and disagreement by diminishing social ties among certain people, or even disconnecting them.

\end{document}